\newtheorem{proposition}{Proposition}
\newtheorem{definition}{Definition}
\newtheorem{theorem}{Theorem}
\newtheorem{remark}{Remark}
\def\u{\textbf{\textit{u}}} \def\x{\textbf{\textit{x}}}  \def\z{\textbf{\textit{z}}}  \def\vv{\textbf{\textit{v}}}
\def\y{\textbf{\textit{y}}} \def\b{\textbf{\textit{b}}}  \def\0{\boldsymbol 0} \def\w{\textbf{\textit{w}}}
 \def\t{\textbf{\textit{t}}}
\def\etal{\itshape et al.}
\newcommand{\D}{\mbox{\bf D}}
\newcommand{\RD}{{\mbox{\bf RD}}}
\newcommand{\HD}{{\mbox{\bf HD}}}
\newcommand{\SD}{{\mbox{\bf SD}}}
\newcommand{\PD}{{\mbox{\bf PD}}}
\newcommand{\ZD}{{\mbox{\bf ZD}}}
\newcommand{\HRD}{{\mbox{\bf HRD}}}
\newcommand{\SRD}{{\mbox{\bf SRD}}}
\newcommand{\PRD}{{\mbox{\bf PRD}}}
\newcommand{\ZRD}{{\mbox{\bf ZRD}}}
\newcommand{\bec}{\begin{center}}
\newcommand{\enc}{\end{center}}
\newcommand{\bee}{\begin{eqnarray*}}
\newcommand{\ene}{\end{eqnarray*}}
\newcommand{\beq}{\begin{equation}}
\newcommand{\eeq}{\end{equation}}
\begin{document}

\title{\bf General notions of regression depth function}
\vskip 5mm

\author {{Xiaohui Liu$^{a, b}$   \footnote{Corresponding author's email: csuliuxh912@gmail.com.}
        Yuanyuan Li$^{a, b}$,
        }\\ \\[1ex]
        {\em\footnotesize $^a$ School of Statistics, Jiangxi University of Finance and Economics, Nanchang, Jiangxi 330013, China}\\
        {\em\footnotesize $^b$ Research Center of Applied Statistics, Jiangxi University of Finance and Economics, Nanchang,}\\ {\em\footnotesize Jiangxi 330013, China}\\
}

\maketitle

\begin{center}
{\sc Abstract}
\end{center}

As a measure for the centrality of a point in a set of multivariate data, statistical depth functions play important roles in multivariate analysis, because one may conveniently construct descriptive as well as inferential procedures relying on them. Many depth notions have been proposed in the literature to fit to different applications. However, most of them are mainly developed for the location setting. In this paper, we discuss the possibility of extending some of them into the regression setting. A general concept of regression depth function is also provided. 

\vspace{2mm}

{\small {\bf\itshape Key words:} General notion; Projection regression depth; Rayleigh regression depth; Zonoid regression depth}
\vspace{2mm}

{\small {\bf2000 Mathematics Subject Classification Codes:} 62F10; 62F40; 62F35}

\setlength{\baselineskip}{1.5\baselineskip}

\vskip 0.1 in
\section{Introduction}
\paragraph{}
\vskip 0.1 in \label{Introduction}

Equipping the data set with a proper ordering can bring great convenience to statistical inferences. For one dimensional data, a linear ordering exists. Hence, it is easy for practitioners to construct some descriptive as well as inferential procedures based on this ordering. However, since no natural ordering exists in spaces with dimension $d > 1$, how to order the multivariate observations is not trivial.

To this end, \cite{Tuk1975} heuristically introduced a center-outward ordering. He first defined a so-called depth function, i.e.,
\begin{eqnarray*}
  \HD(\x, P) = \inf_{\u \in \mathcal{S}^{d-1}} P(\u^\top X \le \u^\top \x) \text{ with } \mathcal{S}^{d-1} = \{\vv \in \mathbb{R}^d: \|\vv\| = 1\} \text{ and } d \ge 1,
\end{eqnarray*}
to measure the centrality of an any given point $\x \in \mathbb{R}^d$ with respect to the probability measure $P$ related to the random vector $X$. This measure of the centrality of $\x$ is actually a score for $\x$, which is one dimensional. Using this score, one then is able to order the $d$-variate observations. In the literature, this function is usually referred to as halfspace depth, but also Tukey depth in order to reflect the seminal work of Tukey. Simply replacing $P$ with its empirical version $P_n$, one may obtain the sample halfspace depth.

It is well known that halfspace depth enjoys many desirable properties. Using it, it is easy to extend many famous univariate estimators to the multivariate setting, such as univariate median, i.e.,
\begin{eqnarray}
\label{med}
    \text{Med}(\mathbf{p}_{n}) = \frac{\mathbf{z}_{(\lfloor (n + 1) / 2 \rfloor)} + \mathbf{z}_{(\lfloor (n + 2) / 2 \rfloor)}}{2}.
\end{eqnarray}
Note that it happens to be the average of all points that maximize the halfspace depth function related to the univariate random sample $\mathbf{z}^n = \{\mathbf{z}_1, \mathbf{z}_2, \cdots, \mathbf{z}_n\}$ with $\mathbf{z}_{(1)}, \mathbf{z}_{(2)}, \cdots, \mathbf{z}_{(n)}$ being the corresponding ordered statistics such that $\mathbf{z}_{(1)} \leq \mathbf{z}_{(2)} \leq \cdots \leq \mathbf{z}_{(n)}$, where $\lfloor \cdot \rfloor$ denotes the floor function, and $\mathbf{p}_{n}$ the empirical probability measure related to $\mathbf{z}^n$. Hence, it is naturally to define the average of all maximizers of $\HD(\x, P)$ as the median for a given $P$ for $d \ge 1$. Compared to the sample mean, one merit of $\text{Med}(\mathbf{p}_{n})$ is its high breakdown point robustness against to a great proportion of outliers \citep{LZW2017}.

Following the same fashion to \cite{Tuk1975}, many other depth notions have also been proposed in the literature. Among them, the most famous are simplicial depth \citep{Liu1990}, zonoid depth \citep{KM1997}, projection depth \citep{Zuo2003},  Rayleigh depth \citep{Zuo2003, HWW2011}, and so on. These depth notions differ themselves from each other in properties like robustness, continuity, and sensitivity regarding the data. For example, the sample version of halfspace depth, simplicial depth, and zonoid depth is discontinuous, while that of both projection depth and Rayleigh depth is continuous. The median-like estimators related to zonoid depth and Rayleigh depth is the sample mean, which is sensitive to few outliers. In constant, similar estimators related to the halfspace depth, simplicial depth and projection depth are well known to be robust \citep{Zuo2003}, etc. To be convenient for preferring one such function over another among different depth notions, an axiomatic definition for the general notions of statistical depth function is given in \cite{ZS2000}.

Since their introduction, statistical depth functions have proved extremely useful in various applications. To name but a few, using the depth-based ordering, one may easily weight/trim observations and in turn build many estimators alterative to the conventional sample mean or covariance, and construct some test procedures for multivariate location and scale; see for example \cite{Zuoetal2004, Mas2009, ZC2005, CSF2011} and references therein for details.

Furthermore, note that statistical depths play a similar role as quantile, which characterizes the underlying distribution \citep{KZ2010}. \cite{LPS1999} suggested a useful graphical tool, namely, $DD$-plot (depth versus depth plot), which may be utilized to compare two multivariate probability distributions by plotting their depths against each other, generalizing the well-known $QQ$-plot (quantile versus quantile plot) into multivariate descriptive statistics. \cite{YehS97} developed a depth-based bootstrap method for constructing confident regions with the critical value determined by halfspace depths of estimators based on the bootstrap samples. This kind of bootstrap regions has a data-dependent shape, and usually performs better than the elliptical regions derived from the normal distributions; see also \cite{WL2012} for similar discussions for some other depth functions. \cite{ZS2000a} proposed using the area of the depth region, which is actually a generalization of the length of the interval formed by two quantiles, to measure the dispersion of a distribution, etc.

Observe that the smaller the depth of a point is, the more outside the position of this point lies related to the data cloud. Using this fact, many classifiers have been developed based on depth functions \citep{Liu1990, GC2005, LiCAL12, LMM2014}.  Bagdistance has also been suggested by \cite{HRS2016} as an alternative to the Euclidean distance to measure the distance of a point to classes. Using this kind of distance, depth-based \emph{kNN}-methods can be developed \citep{PV2015}. Usually, these classifiers are affine-invariant, and robust if a robust depth, such as halfspace depth and projection depth, is employed.

Other applications of depth functions and their induced ordering, including risk measurement \citep{CM2007}, have also been intensively investigated in the literature. Recently, \cite{ELL2015} successfully applied extreme value statistics to refine the empirical half-space depth in ``the tail''. Extreme depth-based quantile regions are also induced, and applied to extreme analysis \citep{HE2017}. For more discussions about depth functions, we refer readers to \cite{Mos2013} for a early summary. An updated short survey about this topic can be found in \cite{RH2015}.

Since statistical depth functions have found themselves so useful in the location setting, a natural question that arises is: How to extend the aforementioned depth notions into the regression setting? Excellent works in that direction have been pioneered by \cite{RouHub1999}, which extended halfspace depth to the regression setting and introduced ``regression depth'' with a connection to the Daniels's test. This regression depth has been paid much attention since its introduction. It turns out that the median-like regression estimator induced from this depth is robust with breakdown point value converging almost surely to $1/3$ in any dimension \citep{VSHS2002}. Using this depth, one is also able to build classification methods \citep{CFJ2002}, and test procedures \citep{Wil2010}. However, the possibility of extending the other depth notions to the regression setting is not discussed yet in the literature. This motivates us to further consider the current research in this paper.

The rest of this paper is organized as follows. We first review the statistical depth functions in the location setting in Section \ref{Overview}. Then along the line of Section \ref{Overview}, we present a general concept of regression depth. Through a further look at the halfspace regression depth given by \cite{RouHub1999}, we obtain a strong relationship between the halfspace regression depth and Tukey's halfspace depth. This actually shows us a way to extend some other depth notions to the regression setting. A few new regression depth functions are proposed in Section \ref{Extensions}, and illustrated in Section \ref{Illustrations}. The detailed proofs of the main results are provided in the Appendix. Concluding remarks end the paper.

\vskip 0.1 in
\section{Depth functions in the location setting: An overview}
\paragraph{}
\vskip 0.1 in \label{Overview}

In this section, we review the general concept of statistical depth functions, and some of its instances in the location setting.

Let $X\in \mathbb{R}^d$ ($d \ge 1$) be a random vector, $P_X$ be its related probability measure, and $X_1, X_2, \cdots, X_n$ be $n$ i.i.d. random copies of $X$. Following from \cite{ZS2000}, for any $\x \in \mathbb{R}^d$, its depth $\D(\x, P_X)$ with respect to $P_X$ is defined as follows.
\begin{definition}[Zuo and Serfling, 2000b]
\label{Depth}
   Assume that the mapping $\D(\cdot, \cdot): \mathbb{R}^{d}\times \mathcal{P}\rightarrow \mathbb{R}^1$ is bounded, non-negative, and satisfies the following properties P1-P4.
\begin{enumerate}
  \item[P1.] \textbf{Affine-invariance}. For any $d\times d$ nonsingular matrix $\mathbb{A}$ and $d$-vector $\b$, we have that $\D(\mathbb{A}\x + \b, P_{\mathbb{A}X+\b}) = \D(\x, P)$.

  \item[P2.] \textbf{Maximality at $\x_{0}$}. For any $P_X \in \mathcal{P}$, there exists a $\x_0 \in \mathbb{R}^d$ such that: $\D (\x_{0}, P_X) = \sup_{\x \in \mathbb{R}^{d}} \D (\x, P_X)$.

  \item[P3.] \textbf{Monotonicity relate to $\x_0$}. For any $P_X\in \mathcal{P}$ having a deepest point $\x_0$, $\D (\x, P_X) \leq  \D (\x_0 + \lambda  (\x - \x_{0}), P_X)$ holds for $\lambda \in [0, 1]$.

  \item[P4.] \textbf{Vanishing at infinity}. $\D (\x, P_X) \rightarrow 0$ as $\|\x\| \rightarrow \infty$ for each $P_X \in \mathcal{P}$.
\end{enumerate}
\end{definition}

As mentioned in Introduction, classical instances of this general depth notion include: halfspace depth, simplicial depth, projection depth,  Rayleigh depth, and zonoid depth, etc. In the following, we will represent their population version and empirical version. For simplicity, here we denote the random samples as $\mathcal{X}^n = \{X_1, X_2, \cdots, X_n\}$, and the related empirical probability measure as $P_n$.

\emph{Halfspace depth}. Its population version is given in Introduction. Its empirical version is as follows.
\begin{eqnarray}
\label{HD001}
  \HD(\x, P_n) &=& \inf_{\u \in \mathcal{S}^{d - 1}} P_n(\u^\top X \le \u^\top \x)\\
  &=& \inf_{\u \in \mathcal{S}^{d - 1}} \frac{1}{n} \#\left\{i: \u^\top X_i \le \u^\top \x, i = 1, 2, \cdots, n \right\},\nonumber
\end{eqnarray}
where $\#(A)$ denotes the cardinal number of a set $A$. Clearly, $\HD(\x, P_n)$ is equal to the minimum proportion of observations contained in any closed halfspace with $\x$ on its boundary.

\emph{Simplicial depth}. \cite{Liu1990} introduced the following simplicial depth:
\begin{eqnarray*}
  \SD(\x, P) &=& P(\x \in S[X_1, X_2, \cdots, X_{d+1}])\\
  &=& E(I(\x \in S[X_1, X_2, \cdots, X_{d+1}])),
\end{eqnarray*}
i.e., the probability of $\x$ covered by the random simplex $S[X_1, X_2, \cdots, X_{d+1}]$ formed by $d+1$ i.i.d. random vectors $X_1, X_2, \cdots, X_{d+1}$. Its sample version happens to be a $U$-statistics, i.e.,
\begin{eqnarray*}
  \SD(\x, P_n) = {n \choose d+1}^{-1} \sum_{1\leq i_1< i_2 < \cdots< i_{d+1}\leq n} I(\x \in S[X_{i_1}, X_{i_2}, \cdots, X_{i_{d+1}}]).
\end{eqnarray*}

\emph{Projection depth}. \cite{Zuo2003} proposed the following projection depth:
\begin{eqnarray*}
  \PD(\x, P) &=& \frac{1}{1 + O(\x, P)} \text{ with } O(\x, P) = \sup_{\u \in \mathcal{S}^{d-1}} \frac{|\u^\top \x - \text{Med}(P_{\u})|}{\text{MAD}(P_{\u})}.
\end{eqnarray*}
Unlike the aforementioned two type of depths, the definition of projection depth relies on the outlyingness of $\x$ related to $P$. Its sample version is:
\begin{eqnarray*}
  \PD(\x, P_n) = \frac{1}{1 + O(\x, P_n)} \text{ with } O(\x, P_n) = \sup_{\u \in \mathcal{S}^{d-1}} \frac{|\u^\top \x - \text{Med}(P_{\u, n})|}{\text{MAD}(P_{\u, n})}.
\end{eqnarray*}
Here $P_{\u}$ and $P_{\u, n}$ denote the one dimensional probability measure and empirical probability measure related to $\u^\top X$ and $\{\u^\top X_1, \u^\top X_2, \cdots, \u^\top X_n\}$, respectively. MAD$(P_{\u, n})$ denotes the median absolute deviation of $\u^\top X_i$, $i = 1, 2, \cdots, n$, to their median $\text{Med}(P_{\u, n})$, which is specified in \eqref{med}.

\emph{Rayleigh depth}. This depth is actually a hybrid of the projection depth by replacing (Med, MAD) with $(\mu, \sigma)= $ (mean, standard deviation) \citep{Zuo2003}, namely,
\begin{eqnarray*}
  \PD^r(\x, P) &=& \frac{1}{1 + O^r(\x, P)} \text{ with } O^r(\x, P) = \sup_{\u \in \mathcal{S}^{d-1}} \frac{|\u^\top \x - \mu(P_{\u})|}{\sigma(P_{\u})}.
\end{eqnarray*}
Since the solution of $O^r(\x, P)$ here is that of a Rayleigh quotient problem \citep{HWW2011}, in the sequel we call it \emph{Rayleigh depth} for convenience. Its sample version is:
\begin{eqnarray*}
  \PD^r(\x, P_n) = \frac{1}{1 + O^r(\x, P_n)} \text{ with } O^r(\x, P_n) = \sup_{\u \in \mathcal{S}^{d-1}} \frac{|\u^\top \x - \mu(P_{\u, n})|}{\sigma(P_{\u, n})}.
\end{eqnarray*}

\emph{Zonoid depth}. This depth was introduced by \cite{KM1997}. Different from all depth above, zonoid depth is of $L_2$ nature. Its population version is:
\begin{eqnarray*}
  \ZD(\x, P) &=& \sup\Bigg{\{}\alpha: \x = \int_{\mathbb{R}^d} g(X) X dP, ~\int_{\mathbb{R}^d} g(X) dP = 1, \\
             & & \quad\quad\quad\quad   g(\cdot) \text{ is measurable, and } g(\t) \in [0, 1/\alpha] \text{ for any } \t \in \mathbb{R}^d \Bigg{\}}.
\end{eqnarray*}
Its sample version is then:
\begin{eqnarray*}
  \ZD(\x, P_n) &=& \sup\Bigg{\{}\alpha: \x = \sum_{i=1}^n \lambda_i X_i, ~\sum_{i=1}^n \lambda_i = 1, ~n\lambda_i \in [0, 1/\alpha], ~\forall i \Bigg{\}},
\end{eqnarray*}
if $\x \in$ the convex hull of $\mathcal{X}^n$. Otherwise, its value is defined to be zero.

All instances mentioned above satisfy Properties P1-P4 of defining a general depth function, but for the \emph{sample simplicial depth, who does not satisfies Property P3}.

\vskip 0.1 in
\section{General notions of regression depth}
\paragraph{}
\vskip 0.1 in \label{HRD}

In this section, we will first provide a general definition of regression depth, and then give a further look at the halfspace regression depth proposed by \cite{RouHub1999}. All detailed proofs of the main results can be found in the Appendix.

\subsection{General regression depth function}\label{GeneralRD}
\paragraph{}

Suppose the i.i.d. random samples $\mathcal{Z}^n := \{(X_i, Y_i)\}_{i=1}^n \subset \mathbb{R}^d \times \mathbb{R}^1$ are generated from the following linear model:
\begin{eqnarray}
\label{linearM}
    Y = \beta_{0} + \bm{\beta}_{1}^\top X + \varepsilon,
\end{eqnarray}
where $Y$ denotes the response variable, $X$ the $d$-variate covariates, $\bm{\theta} = (\beta_0, \bm{\beta}_1^\top )^\top$ the unknown parameter, and $\varepsilon$ the random model error, which is symmetrically distributed about 0 conditionally on given $X$.

For simplicity, we denote $\bm\theta_0$ as the true value of $\bm\theta$, write $\z = (\x^\top, \y)^\top $, $\w = (1, \x^\top )^\top $, $Z = (X^\top, Y)^\top $, $W = (1, X^\top )^\top $, $\mathcal{S}^{d} = \{\u \in \mathbb{R}^{d+1} : \|\u\| = 1\}$, and let $\bar{P} = P_{X, Y}$ be the probability measure related to $Z$. Hereafter,  we assume that the covariance matrix of $W$ to be positive, which further implies $E(WW^\top)$ is positive.

Similar to \cite{ZS2000}, for a given class of probability measures $\mathcal{F}$, we can define the
general regression depth function as follows.

\begin{definition}
\label{RegDepth}
   Assume that the mapping $\RD(\cdot, \cdot): \mathbb{R}^{d+1}\times \mathcal{F}\rightarrow [0, 1]$ satisfies the following properties Q1-Q4.
\begin{enumerate}
  \item[Q1.] \textbf{Affine and scale equivariance} \citep{RouLer1987}. Namely, $\RD(\cdot, \cdot)$ satisfies
$$\RD\left({\bm\beta_{0}\choose \Sigma^{-1} \bm\beta_{1}}, P_{\Sigma X, Y}\right) = \RD\left(\bm\theta, P_{X, Y}\right)$$
and
$$\RD(b\bm\theta, P_{X, bY}) = \RD(\bm\theta, P_{X, Y})$$
for nonsingular $(d - 1)\times (d - 1)$ matrices $\Sigma$ and $b \in \mathbb{R}^1$, respectively.

  \item[Q2.] \textbf{Maximality at $\bm\theta_{0}$}. For any $P_{X, Y} \in \mathcal{F}$, we have that: $\RD (\bm\theta_{0}, P_{X, Y}) = \sup_{\bm\theta \in \mathbb{R}^{d}} \RD (\bm\theta, P_{X, Y})$.

  \item[Q3.] \textbf{Monotonicity relate to $\bm\theta_0$}. For any $P_{X, Y}\in \mathcal{F}$, $\RD (\bm\theta, P_{X, Y}) \leq  \RD (\bm\theta_0 + \lambda  (\bm\theta - \bm\theta_{0}), P_{X, Y})$ holds for $\lambda \in [0, 1]$.

  \item[Q4.] \textbf{Vanishing at infinity}. $\RD (\bm\theta, P_{X, Y}) \rightarrow 0$ as $\|\bm\theta\| \rightarrow \infty$ for each $P_{X, Y} \in \mathcal{F}$.
\end{enumerate}
\end{definition}

Then $\RD(\cdot, \cdot)$ is called a \emph{statistical regression depth function}. Property Q1 is important, because it requires the depth value of the coefficient $\bm\theta$ to be independent of some common data transformations, such as rotation and rescaling. Q2 and Q4 ensure that one can define median-like estimators relying the regression depth function. Q3 ensures a decreasing ordering if $\bm\theta$ moves away from $\bm\theta_0$ to infinity along any line stemming from $\bm\theta_0$.

\subsection{A further look at the halfspace regression depth}
\paragraph{}

It is known that \cite{RouHub1999} developed a regression depth function extended from Tukey's halfspace depth in the location setting. By \cite{BH1999}, its sample version can actually be expressed as follows:
\begin{eqnarray}
\label{HRD001}
   && \HRD(\bm\theta, \bar{P}_n)\\
   && \quad=\inf_{v_0\in \mathcal{R}^1, \vv_1 \in \mathcal{S}^{d-1}} \frac{1}{n} \min\left\{\sum_{i=1}^n I\left(r_i(\bm\theta) (\vv_1^\top X_i - v_0) \ge 0\right), ~ \sum_{i=1}^n I\left(r_i(\bm\theta) (\vv_1^\top X_i - v_0) \le 0\right)\right\},\nonumber
\end{eqnarray}
where $r_i(\bm\theta) = Y_i - \beta_0 - \bm\beta_1^\top X_i = Y_i - W_i^\top \bm\theta$ and $\bar{P}_n$ denotes the empirical probability measure related to $\mathcal{Z}^n$. But \emph{slightly differently}, the discussions in \cite{RouHub1999, BH1999} are based on an integer valued function, i.e., $rdepth(\bm\theta, \bar{P}_n) = n \HRD(\bm\theta, \bar{P}_n)$, instead. Here we technically use $\HRD(\bm\theta, \bar{P}_n)$ to ensure that $\HRD(\bm\theta, \bar{P}_n) \in [0, 1]$. Furthermore, we use $\sum_{i=1}^n I\left(r_i(\bm\theta) (\vv_1^\top X_i - v_0) \le 0\right)$ instead of $\sum_{i=1}^n I\left(r_i(\bm\theta) (\vv_1^\top X_i - v_0) < 0\right)$. But it would make no significant difference because $\sum_{i=1}^n I\left(r_i(\bm\theta) (\vv_1^\top X_i - v_0) \le 0\right) = n - \sum_{i=1}^n I\left(r_i(\bm\theta) (\vv_1^\top X_i - v_0) > 0\right)$, and so is the case for $\sum_{i=1}^n I\left(r_i(\bm\theta) (\vv_1^\top X_i - v_0) \ge 0\right)$.

Observe that the connection between \eqref{HRD001} and \eqref{HD001} is not so obvious. Hence, we propose to consider the following proposition.

\begin{proposition}
\label{equivalenceHD}
  Suppose $\mathcal{Z}^n$ are generated from \eqref{linearM}. Then the halfspace regression depth of $\bm\theta$ with respect to $\mathcal{Z}^n$ is equal to the halfspace depth of \textbf{0} with respect to $\{(Y_i - W_i^\top \bm\theta) W_i\}_{i=1}^n$. That is,
  \begin{eqnarray*}
    \HRD(\bm\theta, \bar{P}_n) = \HD(\0, \bar{P}_{\theta, n}),
  \end{eqnarray*}
  where $\bar{P}_{\theta, n}$ denotes the empirical probability measure related to $\{(Y_i - W_i^\top \bm\theta) W_i\}_{i=1}^n$.
\end{proposition}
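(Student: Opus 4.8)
\noindent The plan is to rewrite both quantities as the infimum, over the unit sphere, of one and the same family of indicator sums, and then to reconcile the two different parametrizations appearing in \eqref{HD001} and \eqref{HRD001}. First, unfolding the right-hand side: with $P_i := (Y_i - W_i^\top\bm\theta)W_i = r_i(\bm\theta)W_i$, the definition \eqref{HD001} together with the identity $\u^\top P_i = r_i(\bm\theta)(\u^\top W_i)$ gives
\[
  \HD(\0, \bar{P}_{\theta, n}) = \inf_{\u \in \mathcal{S}^{d}} \frac{1}{n}\sum_{i=1}^n I\big(r_i(\bm\theta)\,\u^\top W_i \le 0\big).
\]
Next I would eliminate the inner $\min$ from \eqref{HRD001}: the substitution $(\vv_1, v_0) \mapsto (-\vv_1, -v_0)$ is an involution of $\mathcal{S}^{d-1}\times\mathbb{R}$ that interchanges the ``$\ge 0$'' sum with the ``$\le 0$'' sum, so that
\[
  \HRD(\bm\theta, \bar{P}_n) = \inf_{\vv_1 \in \mathcal{S}^{d-1},\, v_0 \in \mathbb{R}} \frac{1}{n}\sum_{i=1}^n I\big(r_i(\bm\theta)(\vv_1^\top X_i - v_0) \le 0\big).
\]

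Now I would match the two index sets. Writing $\u = (u_0, \u_1^\top)^\top$, so that $\u^\top W_i = u_0 + \u_1^\top X_i$, every $\u\in\mathcal{S}^d$ with $\u_1 \ne \0$ corresponds to $\vv_1 = \u_1/\|\u_1\|\in\mathcal{S}^{d-1}$ and $v_0 = -u_0/\|\u_1\|\in\mathbb{R}$ through $\u^\top W_i = \|\u_1\|\,(\vv_1^\top X_i - v_0)$; since $\|\u_1\|>0$, this correspondence preserves $\mathrm{sign}(\u^\top W_i)$, hence each indicator, and it is onto $\mathcal{S}^{d-1}\times\mathbb{R}$ (given $(\vv_1,v_0)$, take $\u = (1+v_0^2)^{-1/2}(-v_0, \vv_1^\top)^\top$). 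Consequently the infimum in the displayed formula for $\HRD(\bm\theta,\bar{P}_n)$ coincides with the infimum defining $\HD(\0,\bar{P}_{\theta,n})$, but restricted to the subset $\{\u\in\mathcal{S}^d : \u_1 \ne \0\}$.

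It then remains to check that the two ``poles'' $\u = (\pm1, \0^\top)^\top$ do not lower the infimum. For these, $\u^\top W_i = \pm1$, so the indicator sum equals $\#\{i : r_i(\bm\theta)\le 0\}$ or $\#\{i : r_i(\bm\theta)\ge 0\}$; but both of these integers already occur in the $\HRD$ infimum, since for any fixed $\vv_1$ and $v_0 < \min_i \vv_1^\top X_i$ (resp. $v_0 > \max_i \vv_1^\top X_i$) one has $\vv_1^\top X_i - v_0 > 0$ (resp. $<0$) for every $i$, whence the sum collapses to $\#\{i:r_i(\bm\theta)\le 0\}$ (resp. $\#\{i:r_i(\bm\theta)\ge 0\}$). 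Splitting $\mathcal{S}^d$ into $\{\u_1\ne\0\}\cup\{\u_1=\0\}$ then yields $\HD(\0,\bar{P}_{\theta,n}) = \HRD(\bm\theta,\bar{P}_n)$.

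The one delicate point is exactly this last reconciliation of constraint sets: the regression-depth formula lets $v_0$ run over all of $\mathbb{R}$, whereas the halfspace-depth formula couples $u_0$ to $\u_1$ via $\|\u\|=1$ and additionally admits the degenerate directions $\u_1 = \0$, so one has to verify that translating between the two descriptions neither drops nor introduces an extremal value. Everything else is routine sign-bookkeeping; I would also note in passing that the identity is purely geometric and does not in fact use the linear model \eqref{linearM}.
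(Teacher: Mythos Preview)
Your proof is correct and follows essentially the same route as the paper's: both eliminate the inner $\min$ by the sign-flip $(\vv_1,v_0)\mapsto(-\vv_1,-v_0)$, both reparametrize $(\vv_1,v_0)\in\mathcal{S}^{d-1}\times\mathbb{R}$ as a unit vector $\u\in\mathcal{S}^d$ with nonzero tail, and both then have to argue that the two excluded ``poles'' $\u=(\pm1,\0^\top)^\top$ do not lower the infimum. The only substantive difference is in this last step: the paper disposes of the poles by invoking an external result (\cite{LZ2014a}) to the effect that the minimum of $G(\u)$ is attained on an open subset of $\mathcal{S}^d$, whereas you give a direct elementary argument---sending $v_0\to\pm\infty$ forces all the $\vv_1^\top X_i-v_0$ to share a sign, so the pole values $\#\{i:r_i(\bm\theta)\le 0\}$ and $\#\{i:r_i(\bm\theta)\ge 0\}$ are already realized inside the $\HRD$ infimum. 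Your treatment is more self-contained and makes the proof independent of any structural fact about where the halfspace-depth infimum is attained; your closing remark that the identity is purely combinatorial and does not use the model \eqref{linearM} is also correct.
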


Based on this result, it is easy to derive that, for any given $\bm\theta$, $\HRD(\bm\theta, \bar{P}_n)$ converges in probability to its population version, i.e.,
\begin{eqnarray*}
  \HRD(\bm\theta, \bar{P}) = \inf_{\u\in \mathcal{S}^d} P\left(\u^\top(Y - W^\top \bm\theta)W \le 0\right) := \HD(\0, \bar{P}_{\theta})
\end{eqnarray*}
by using the empirical process theory similarly to \cite{BH1999} and \cite{Zuo2003}. We omit the details.

Intuitively, $\HRD(\bm\theta, \bar{P}) = \HD(\0, \bar{P}_{\theta})$ explains the reasonability of the definition of the halfspace regression depth defined in \cite{RouHub1999} from an another point of view. In fact, if $E(\varepsilon|X) = 0$, then $E((Y - W^\top \bm\theta_0)W) = E(\varepsilon W) = \0$, which implies that $(Y - W^\top \bm\theta_0)W$ is distributed around $\0$. Hence, $\HRD(\bm\theta_0, \bar{P})$ tends to take a large value. On the other hand, for any $\bm\theta \neq \bm\theta_0$,  observe that $E(\u_{\theta}^\top (Y - W^\top \bm\theta)W) = \u_{\theta}^\top E(WW^\top) (\bm\theta_0 - \bm\theta) > 0$ if $E(WW^\top)$ is positive, where $\u_{\theta} = (\bm\theta_0 - \bm\theta) / \|\bm\theta_0 - \bm\theta\|$. Hence, $\HRD(\bm\theta, \bar{P})$ tends to take a small value due to $\HRD(\bm\theta, \bar{P}) \leq P\left(\u_{\theta}^\top(Y - W^\top \bm\theta)W \le 0\right)$.

Since the halfspace regression depth of \cite{RouHub1999} is the first regression depth notion in the literature, a natural question that arises is: Whether or not it satisfies all four Properties Q1-Q4 of defining a general regression depth function given in Section \ref{GeneralRD}?

Before answering this question, we need first to clarify a concept, i.e., \emph{regression halfspace symmetry}, specified as follows.

\begin{definition}
  We say that $Z = (X^\top, Y)^\top $ is regression halfspace symmetrically distributed about $\bm\theta_0 \in \mathbb{R}^{d+1}$ if $\bm\theta_0$ satisfies
  \begin{eqnarray*}
     P\left(\u^\top (Y - W^\top \bm\theta_0)W \le 0 \right) \ge \frac{1}{2}, \quad \text{for } \forall \u \in \mathcal{S}^d.
  \end{eqnarray*}
\end{definition}

\begin{remark}
  The statement that $(X^\top, Y)^\top$ is regression halfspace symmetrically distributed is actually equivalent to that there is a $\bm\theta_0 \in \mathbb{R}^{d+1}$ such that $(Y - W^\top \bm\theta_0)W$ is halfspace symmetrically distributed about \textbf{0}; for the definition of halfspace symmetry, we refer reads to \cite{ZS2000c}.
\end{remark}

Bearing this concept in mind, we obtain the following result.

\begin{theorem}
\label{ThHRD}
  Suppose $X$ is continuous and $Z = (X^\top, Y)^\top $ is regression halfspace symmetrically distributed about a unique $\bm\theta_0$. The halfspace regression depth function $\HRD(\bm\theta, \bar{P})$ satisfies Properties Q1, Q2 and Q4.
\end{theorem}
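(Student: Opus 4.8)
The plan is to reduce every claim to the population identity recorded just after Proposition~\ref{equivalenceHD},
\begin{equation}\label{poprel}
  \HRD(\bm\theta,\bar{P})\;=\;\HD(\0,\bar{P}_{\theta})\;=\;\inf_{\u\in\mathcal S^{d}}P\!\left(\u^{\top}(Y-W^{\top}\bm\theta)W\le 0\right),
\end{equation}
which turns a statement about $\HRD$ into a statement about the law $\bar{P}_{\theta}$ of $V_{\theta}:=(Y-W^{\top}\bm\theta)W$. With \eqref{poprel} in hand, Q1 is essentially a corollary of the affine invariance P1 of halfspace depth. Under the covariate map $X\mapsto\Sigma X$, $W$ becomes $AW$ with $A=\mathrm{diag}(1,\Sigma)$ nonsingular, and the matching reparametrisation of $\bm\theta$ leaves the residual $Y-W^{\top}\bm\theta$ unchanged; hence $V_{\theta}$ becomes $AV_{\theta}$, the new $\bar{P}_{\theta}$ is its image $A_{*}\bar{P}_{\theta}$, and P1 (with matrix $A$, shift $\0$, point $\0$) gives $\HD(\0,A_{*}\bar{P}_{\theta})=\HD(\0,\bar{P}_{\theta})$, i.e.\ the affine part of Q1. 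The scale part is the same argument with $A=bI_{d+1}$, $b\neq0$, since $(bY-W^{\top}(b\bm\theta))W=bV_{\theta}$. (One also has to observe that the transformed laws stay in $\mathcal F$, which is clear as the linear-model form, the conditional symmetry of $\varepsilon$ and the positive definiteness of $\mathrm{Cov}(W)$ are all preserved by $A$.)

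The real work is a single one-sided bound: for every $\bm\theta\neq\bm\theta_{0}$, with $\bm\delta:=\bm\theta_{0}-\bm\theta$,
\begin{equation}\label{keybd}
  \HRD(\bm\theta,\bar{P})\;\le\;P\!\left(\varepsilon\ge|W^{\top}\bm\delta|\right).
\end{equation}
I would get this by evaluating the infimum in \eqref{poprel} at the single direction $\u_{\theta}=\bm\delta/\|\bm\delta\|$. Writing $c=\|\bm\delta\|$ and $T=\u_{\theta}^{\top}W$ and using the model $Y=W^{\top}\bm\theta_{0}+\varepsilon$, one has $Y-W^{\top}\bm\theta=W^{\top}\bm\delta+\varepsilon=cT+\varepsilon$, so $\u_{\theta}^{\top}V_{\theta}=T(cT+\varepsilon)$. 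Since $X$ is continuous and $\bm\delta\neq\0$, the event $\{T=0\}=\{W^{\top}\bm\delta=0\}$ is $P$-null, so up to a null set $\{\u_{\theta}^{\top}V_{\theta}\le0\}=\{T>0,\ \varepsilon\le-cT\}\cup\{T<0,\ \varepsilon\ge-cT\}$. Conditioning on $X$ and using that $\varepsilon\mid X$ is symmetric about $0$ collapses both pieces to $P(\varepsilon\ge c|T|\mid X)=P(\varepsilon\ge|W^{\top}\bm\delta|\mid X)$, and taking expectation over $X$ yields \eqref{keybd}.

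Property Q2 follows at once from \eqref{keybd}: because $X$ is continuous and $\bm\delta\neq\0$ we have $|W^{\top}\bm\delta|>0$ a.s., so symmetry of $\varepsilon\mid X$ gives $P(\varepsilon\ge|W^{\top}\bm\delta|\mid X)\le P(\varepsilon>0\mid X)\le\tfrac12$ a.s., whence $\HRD(\bm\theta,\bar{P})\le\tfrac12$ for all $\bm\theta\neq\bm\theta_{0}$; on the other hand, regression halfspace symmetry about $\bm\theta_{0}$ reads exactly $\HRD(\bm\theta_{0},\bar{P})=\inf_{\u}P(\u^{\top}V_{\theta_{0}}\le0)\ge\tfrac12$, so $\bm\theta_{0}$ attains $\sup_{\bm\theta}\HRD(\bm\theta,\bar{P})$. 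For Q4, fix $\eta>0$, choose $M$ with $P(\varepsilon\ge M)<\eta/2$, and bound the right-hand side of \eqref{keybd} by $P(\varepsilon\ge M)+P(|W^{\top}\bm\delta|<M)$; writing $|W^{\top}\bm\delta|=\|\bm\delta\|\,\bigl|(\bm\delta/\|\bm\delta\|)^{\top}W\bigr|$ and noting that $\|\bm\theta\|\to\infty$ forces $\|\bm\delta\|\to\infty$, the second term is at most $\sup_{\|u\|=1}P(|W^{\top}u|<M/\|\bm\delta\|)$, which tends to $0$ provided
\begin{equation}\label{smallball}
  \lim_{t\downarrow0}\ \sup_{\|u\|=1}P\!\left(|W^{\top}u|\le t\right)=0 .
\end{equation}

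The only step that is not routine is \eqref{smallball}: one must exclude a sequence of thinning hyperplanes through the origin that keeps a fixed amount of the mass of $W$. I expect to obtain it from the continuity of $X$ (so $P(X\in H)=0$ for every hyperplane $H\subset\mathbb R^{d}$) together with $\mathrm{Cov}(W)$ being positive definite (so $\mathrm{Var}(u^{\top}W)$ is bounded below uniformly over unit $u$), splitting the unit sphere into directions with $\|u_{1:d}\|$ bounded away from $0$---where $u^{\top}W=u_{0}+u_{1:d}^{\top}X$ carries a controlled atomless one-dimensional law inherited from $X$---and directions with $|u_{0}|$ close to $1$---where $|u^{\top}W|$ is itself bounded away from $0$. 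A clean estimate uniform in $u$ may require a mild strengthening of the hypotheses (for instance a bounded density for $X$); everything else is the bookkeeping indicated above.
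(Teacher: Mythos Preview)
Your argument and the paper's are essentially the same: both establish Q4 by testing the single direction $\u_{\theta}=(\bm\theta_{0}-\bm\theta)/\|\bm\theta_{0}-\bm\theta\|$ in \eqref{poprel}. The paper simply writes
\[
\u_{\theta}^{\top}(Y-W^{\top}\bm\theta)W=\u_{\theta}^{\top}\varepsilon W+(\u_{\theta}^{\top}W)^{2}\,\|\bm\theta_{0}-\bm\theta\|\to+\infty
\]
almost surely and concludes $P(\u_{\theta}^{\top}V_{\theta}\le0)\to0$, while you process the same quantity through conditioning to obtain the clean bound \eqref{keybd}. For Q1 and Q2 the paper merely calls the proofs ``trivial''; your explicit reductions via P1 and via \eqref{keybd} are correct and more informative.

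The one place you hesitate, namely the uniform small-ball estimate \eqref{smallball}, is not needed, and you should not reach for extra hypotheses such as a bounded density on $X$. The difficulty you manufactured comes from replacing $P(|W^{\top}\u_{\theta}|<M/\|\bm\delta\|)$ by the supremum over \emph{all} unit vectors. Keep $\u_{\theta}$ itself and argue by compactness: given any sequence $\bm\theta_{n}$ with $\|\bm\theta_{n}\|\to\infty$, pass to a subsequence with $\u_{\theta_{n}}\to u^{*}\in\mathcal S^{d}$. Since $X$ is continuous, $P(u^{*\top}W=0)=0$, so $|W^{\top}\u_{\theta_{n}}|\to|u^{*\top}W|>0$ a.s.\ and hence $|W^{\top}\bm\delta_{n}|=\|\bm\delta_{n}\|\,|W^{\top}\u_{\theta_{n}}|\to\infty$ a.s.; dominated convergence then gives $P(\varepsilon\ge|W^{\top}\bm\delta_{n}|)\to0$. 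As every subsequence has a further subsequence with limit $0$, the full limit is $0$ and Q4 follows from \eqref{keybd} alone. (This same subsequence step is what makes the paper's brief ``$\to+\infty$ in probability~1'' rigorous, since $\u_{\theta}$ varies with $\bm\theta$.)
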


\begin{remark}
\label{remark2}
In practice, the assumption of regression halfspace symmetry about a unique $\bm\theta_0$ is not too rigorous. It is easy to check that: if $\varepsilon$ is symmetrically distributed about 0 conditionally on a continuous covariate $X$, then this assumption holds; see the Appendix for a detailed proof.
\end{remark}

Unfortunately, it is worth mentioning that we are unable to check that whether or not the halfspace depth function satisfies Properties Q3 without further assumptions. This implies that we can not claim a decreasing ordering from the center to outside induced from this depth similar to the location setting. However, Properties Q1, Q2 and Q4 are still useful if the main purpose is to develop an affine equivariant median-like regression estimator for the unknown coefficient with positive breakdown point robustness.

\section{Some other regression depth functions}\label{Extensions}
\paragraph{}

From Section \ref{Overview}, it is readily to see that many depths in location setting are motivated by Tukey's halfspace depth. On the other hand, the halfspace regression depth of $\bm\theta$ with respect to $\mathcal{Z}^n$ proposed by \cite{RouHub1999} is equal to the halfspace depth of $\0$ with respect to $\{(Y_i - W_i^\top \bm\theta)W_i)\}_{i=t}^n$. Motivated by this, we rewrite the form of the simplicial regression depth given in \cite{RouHub1999}, extend the usually projection depth, Rayleigh depth and zonoid depth into the regression setting, and then check that whether or not they satisfy all four Properties Q1-Q4 given in Section \ref{GeneralRD}. The proofs are given in the Appendix.

\textbf{\emph{Simplicial regression depth}}. For given $\mathcal{Z}^n$, similar to Proposition \ref{equivalenceHD}, we rewrite the simplicial regression depth of $\bm\theta$ with respect to $\mathcal{Z}^n$ given in \cite{RouHub1999} in the following form:
\begin{eqnarray*}
  \SRD(\bm\theta, \bar{P}_n) &=& \SD(\0, \bar{P}_{\theta, n}).
\end{eqnarray*}

Using the central limit theory for $U$-statistics, it is easy to check that, for given $\bm\theta$, $\SRD(\bm\theta, \bar{P}_n)$ converges in distribution and hence in probability to
\begin{eqnarray*}
  \SRD(\bm\theta, \bar{P}) = \bar{P}(\0 \in S[r_{1}(\bm\theta)W_{1}, \cdots, r_{d+1}(\bm\theta)W_{d+1}]).
\end{eqnarray*}

For $\SRD(\bm\theta, \bar{P})$, we have the following result.

\begin{theorem}
\label{th:SRD}
  Under the same conditions of Theorem \ref{ThHRD}, the simplicial regression depth function $\SRD(\bm\theta, \bar{P})$ satisfies Properties Q1, Q2 and Q4.
\end{theorem}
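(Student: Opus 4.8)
The plan is to mirror the proof of Theorem \ref{ThHRD}, exploiting the identity $\SRD(\bm\theta, \bar P) = \SD(\0, \bar P_{\theta})$ together with the known properties of simplicial depth in the location setting. The key observation is that, for $\bm\theta$ ranging over $\mathbb{R}^{d+1}$, the point mass $\bar P_{\theta}$ is the distribution of $(Y - W^\top\bm\theta)W$, and this depends on $\bm\theta$ through an explicit affine-type action on the ``sample'' $r(\bm\theta)W = (Y - W^\top\bm\theta)W$. So each of the four properties will be reduced to a statement about $\SD(\cdot, \cdot)$ evaluated at $\0$, then verified using the properties of location simplicial depth (with the caveat that only Q1, Q2, Q4 are claimed, matching the fact that sample simplicial depth fails P3).

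First I would check Q1. Under $X \mapsto \Sigma X$ one has $W \mapsto \mathrm{diag}(1,\Sigma)\,W =: A_\Sigma W$, and a direct computation shows that the transformed residual times transformed regressor, evaluated at the transformed parameter $(\beta_0, \Sigma^{-1}\bm\beta_1)^\top$, equals $A_\Sigma^{-\top}$ applied — no, more carefully: $r(\bm\theta)$ is invariant under this joint change (since $W^\top\bm\theta$ is unchanged), so $(Y - W^\top\bm\theta)W \mapsto A_\Sigma \cdot (Y - W^\top\bm\theta)W$, a nonsingular linear image. Since a simplex $S[\cdot]$ is mapped to the simplex of the images by a linear map, and $\0 \mapsto \0$, we get $\SD(\0, \bar P_{\theta}) = \SD(\0, (\bar P_{\theta})_{A_\Sigma})$, giving the first equivariance relation; the scaling relation under $Y \mapsto bY$ is handled the same way with the scalar $b$ (here $r(\bm\theta)W \mapsto b\, r(\bm\theta)W$ when $\bm\theta \mapsto b\bm\theta$). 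This is routine once Proposition \ref{equivalenceHD}-style bookkeeping is done. Next, Q2 and Q4: these I would deduce by noting that $\bm\theta_0$ being the regression halfspace symmetry center means $(Y - W^\top\bm\theta_0)W$ is halfspace symmetric about $\0$, and halfspace symmetry about $\0$ forces $\0$ to be the (unique, under continuity of $X$) deepest point for simplicial depth — this is the standard fact, from \cite{Liu1990} and \cite{ZS2000}, that at a halfspace-symmetry center simplicial depth attains its maximum. For Q4 I would show $\|\bm\theta\| \to \infty$ forces the mass of $(Y - W^\top\bm\theta)W$ to escape to infinity along the direction $\u_\theta$ as in the post-Proposition discussion, driving $\bar P(\0 \in S[\cdots]) \to 0$ because a fixed point cannot lie in simplices whose vertices are almost surely far out on one side of a hyperplane through $\0$; concretely, $\SRD(\bm\theta,\bar P) \le (d+1)\,\bar P(\u_\theta^\top r(\bm\theta)W \le 0)^{?}$ — one should instead use that if $\0 \in S[v_1,\dots,v_{d+1}]$ then not all $v_j$ lie in an open halfspace through $\0$, so $\bar P(\0 \in S[\cdots]) \le (d{+}1)\,\bar P(\u_\theta^\top r(\bm\theta)W \le 0)$ and the latter $\to 0$ by the mean-direction argument $E(\u_\theta^\top r(\bm\theta)W) = \u_\theta^\top E(WW^\top)(\bm\theta_0 - \bm\theta) \to \infty$.

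The main obstacle I expect is Q4 — making the ``escape to infinity'' argument fully rigorous for simplicial depth rather than halfspace depth. For halfspace depth the bound $\HRD(\bm\theta,\bar P) \le P(\u_\theta^\top r(\bm\theta)W \le 0)$ is immediate; for simplicial depth one needs the combinatorial step that membership of $\0$ in a simplex implies the vertices are not all strictly on one side of any hyperplane through $\0$, hence at least one vertex satisfies $\u_\theta^\top r(\bm\theta)W \le 0$, giving a union bound with factor $d+1$. One must then argue this probability genuinely vanishes: since $\mathrm{Var}(W)$ is positive definite by assumption and $E(\u_\theta^\top r(\bm\theta)W) = \u_\theta^\top E(WW^\top)(\bm\theta_0-\bm\theta)$ grows without bound while the variance of $\u_\theta^\top r(\bm\theta)W$ can be controlled (it grows at most quadratically in $\|\bm\theta\|$ under a moment condition on $\varepsilon$ and $W$), a Chebyshev estimate yields $\bar P(\u_\theta^\top r(\bm\theta)W \le 0) \to 0$ uniformly. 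If no moment assumption beyond those implicit in the setup is available, I would instead invoke the halfspace-symmetry/monotonicity structure or a tightness argument to push the mass out. The remaining verifications (Q1, Q2) are bookkeeping of the affine action plus citation of the location-case facts, and Remark \ref{remark2} already supplies the reduction showing the hypothesis (regression halfspace symmetry about a unique $\bm\theta_0$) holds whenever $\varepsilon$ is conditionally symmetric, so no separate work is needed there.
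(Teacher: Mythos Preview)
Your proposal is correct and follows the same route the paper takes: the paper's proof of Theorem \ref{th:SRD} literally says Q1, Q2 and Q4 are checked ``in the same fashion'' as Theorem \ref{ThHRD}, and your argument supplies exactly those details, including the one genuinely new ingredient---the union bound $\SRD(\bm\theta,\bar P)\le (d{+}1)\,\bar P(\u_\theta^\top r(\bm\theta)W\le 0)$ coming from the observation that $\0\in S[v_1,\dots,v_{d+1}]$ forces at least one $v_j$ into the closed halfspace $\{\u_\theta^\top v\le 0\}$. For the final limit you need not appeal to Chebyshev or moment conditions: the paper's argument for Theorem \ref{ThHRD} already shows $\u_\theta^\top r(\bm\theta)W = \u_\theta^\top\varepsilon W + (\u_\theta^\top W)^2\|\bm\theta_0-\bm\theta\|\to +\infty$ with probability one (continuity of $X$ guarantees $(\u_\theta^\top W)^2>0$ a.s.), which yields $\bar P(\u_\theta^\top r(\bm\theta)W\le 0)\to 0$ directly and hence $\SRD(\bm\theta,\bar P)\to 0$.
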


Once again, the check of whether or not simplicial regression depth satisfying Property Q3 is difficult to achieve without further assumptions.

\textbf{\emph{Projection regression depth}}. Similarly, we may define the projection regression depth of $\bm\theta$ with respect to $\mathcal{Z}^n$ as
\begin{eqnarray*}
  \PRD^*(\bm\theta, \bar{P}_n) &=& \PD(\0, \bar{P}_{\theta, n}).
\end{eqnarray*}
However, note that the denominator part in the outlyingness function contains the coefficient $\bm\theta$. It may bring inconvenience to its theoretical derivation. Hence, we need to avoid this.

Observe that the main purpose of adding the denominator part to the outlyingness function is to make the depth function to be scale equivariant. Hence, we modify the definition of projection regression depth slightly as follows. That is, we propose to consider instead the following projection regression depth:
\begin{eqnarray*}
  \PRD(\bm\theta, \bar{P}_n) = \frac{1}{1 + \tilde{O}(\bm\theta, \bar{P}_n)}
\end{eqnarray*}
with
\begin{eqnarray*}
   \tilde{O}(\bm\theta, \bar{P}_n) = \sup_{\u\in \mathcal{S}^d} \frac{\left|\text{Med}(\bar{P}_{\u, \theta, n})\right|}{\text{MAD}(\tilde{P}_{\u, n})} = \sup_{\u\in \mathcal{S}^d} \frac{\left|\u^\top \0 - \text{Med}(\bar{P}_{\u, \theta, n})\right|}{\text{MAD}(\tilde{P}_{\u, n})},
\end{eqnarray*}
where $\bar{P}_{\u, \theta, n}$ and $\tilde{P}_{\u, n}$ denotes the empirical probability measure related to $\{(Y_i - W_i^\top \bm\theta) \u^\top W_i\}_{i=1}^n$ and $\{\u^\top Y_iW_i\}_{i=1}^n$, respectively.

Under some regular conditions similar to \cite{Zuo2003}, we can derive that $\PRD(\bm\theta, \bar{P}_n)$ converges in probability to its population version as follows:
\begin{eqnarray*}
  \PRD(\bm\theta, \bar{P}) = \frac{1}{1 + \tilde{O}(\bm\theta, \bar{P})}, \text{ with }
   \tilde{O}(\bm\theta, \bar{P}) = \sup_{\u\in \mathcal{S}^d} \frac{\left|\text{Med}(\bar{P}_{\u, \theta})\right|}{\text{MAD}(\tilde{P}_{\u})}.
\end{eqnarray*}

About $\PRD(\bm\theta, \bar{P})$, we have the following result.

\begin{theorem}
\label{th:PRD}
  Under the same conditions of Theorem \ref{ThHRD}, the projection regression depth function $\PRD(\bm\theta, \bar{P})$ satisfies Properties Q1, Q2 and Q4.
\end{theorem}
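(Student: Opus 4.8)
The plan is to verify Properties Q1, Q2, and Q4 in turn for $\PRD(\bm\theta, \bar{P}) = 1/(1 + \tilde{O}(\bm\theta, \bar{P}))$, exploiting that $\PRD$ is a monotone decreasing function of $\tilde{O}$, so it suffices to study the outlyingness functional $\tilde{O}(\bm\theta, \bar{P}) = \sup_{\u\in\mathcal{S}^d} |\Med(\bar{P}_{\u,\theta})| / \MAD(\tilde{P}_{\u})$. For Q1, I would treat the two equivariance requirements separately. For the $\Sigma$-transformation $X \mapsto \Sigma X$ (with $\bm\beta_1 \mapsto \Sigma^{-1}\bm\beta_1$, $\beta_0$ fixed), note that $W = (1, X^\top)^\top$ maps to a linear image $W \mapsto A_\Sigma W$ for the block matrix $A_\Sigma = \mathrm{diag}(1, \Sigma)$, the residual $Y - W^\top\bm\theta$ is invariant under the joint transformation, and the denominator $\MAD(\tilde{P}_{\u})$, built from $\u^\top Y W$, transforms consistently because the supremum over $\u \in \mathcal{S}^d$ can be re-parametrized via $\u \mapsto A_\Sigma^\top \u / \|A_\Sigma^\top \u\|$; the norm factors cancel between numerator and denominator since both $\Med(\bar{P}_{\u,\theta})$ and $\MAD(\tilde{P}_{\u})$ are positively homogeneous of degree $1$ in $\u$. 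This is the reason the denominator was deliberately chosen to involve $\u^\top Y W$ rather than the residual-dependent quantity, and I would emphasize that point. For the scale transformation $Y \mapsto bY$, $\bm\theta \mapsto b\bm\theta$, both residual and $W$-weights in numerator and denominator scale by $|b|$, so $\tilde{O}$ is unchanged.

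For Q2 (maximality at $\bm\theta_0$), I would argue that $\tilde{O}(\bm\theta_0, \bar{P}) = 0$, hence $\PRD(\bm\theta_0, \bar{P}) = 1$, the maximal possible value since $\PRD \le 1$ always. The key is that for every direction $\u$, $\Med(\bar{P}_{\u,\theta_0}) = 0$: by hypothesis $Z$ is regression halfspace symmetric about $\bm\theta_0$, i.e. $P(\u^\top(Y - W^\top\bm\theta_0)W \le 0) \ge 1/2$ for all $\u \in \mathcal{S}^d$; applying this to both $\u$ and $-\u$ gives $P(\u^\top(Y - W^\top\bm\theta_0)W \le 0) = 1/2$ as well (up to the usual care about atoms, which the continuity of $X$ handles, cf. Remark \ref{remark2}), so $0$ is a median of the univariate variable $\u^\top(Y-W^\top\bm\theta_0)W = (Y - W^\top\bm\theta_0)\,\u^\top W$, which is exactly $\bar{P}_{\u,\theta_0}$. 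Since $\MAD(\tilde{P}_{\u}) > 0$ (here I would invoke the positivity of $E(WW^\top)$ together with continuity of $X$ to rule out a degenerate denominator), the ratio is $0$ for every $\u$, so the supremum is $0$. Uniqueness of the maximizer follows from uniqueness of $\bm\theta_0$ in the symmetry assumption: for $\bm\theta \ne \bm\theta_0$ one shows $\tilde{O}(\bm\theta,\bar{P}) > 0$ by producing a direction along which the median of $\bar{P}_{\u,\theta}$ is nonzero, in the spirit of the computation $E(\u_\theta^\top(Y - W^\top\bm\theta)W) = \u_\theta^\top E(WW^\top)(\bm\theta_0 - \bm\theta) > 0$ already displayed in the excerpt (converting this mean statement into a median statement again uses the symmetry about $\bm\theta_0$ and continuity).

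For Q4 (vanishing at infinity), I need $\tilde{O}(\bm\theta, \bar{P}) \to \infty$ as $\|\bm\theta\| \to \infty$, equivalently $\PRD \to 0$. The denominator $\MAD(\tilde{P}_{\u})$ depends only on $\u$, not on $\bm\theta$, and is bounded below over $\mathcal{S}^d$ by a positive constant (lower semicontinuity plus compactness of the sphere plus the non-degeneracy argument above), so it suffices to show $\sup_{\u} |\Med(\bar{P}_{\u,\theta})| \to \infty$. Writing $\bm\theta = \bm\theta_0 + t\bm\delta$ with $\|\bm\delta\| = 1$ and $t \to \infty$, the recentering $Y - W^\top\bm\theta = \varepsilon' - t\,W^\top\bm\delta$ (where $\varepsilon' := Y - W^\top\bm\theta_0$ has conditional median $0$ given $X$) shows $\bar{P}_{\u,\theta}$ is the law of $(\varepsilon' - t W^\top\bm\delta)\u^\top W$; choosing $\u$ suitably (e.g. aligned so that $\u^\top W$ and $W^\top\bm\delta$ are positively correlated on a set of probability bounded away from $0$, again using $E(WW^\top) \succ 0$) forces the median of this variable to grow linearly in $t$. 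I would make this rigorous by a quantitative lower bound: there exist $c > 0$ and a direction $\u$ with $P((W^\top\bm\delta)(\u^\top W) \ge c) \ge 3/4$, whence for $t$ large the median of $\bar{P}_{\u,\theta}$ is at least of order $ct/2$. The main obstacle I anticipate is precisely this last step — controlling the \emph{median} (not the mean) of the recentered projected residual uniformly as $\bm\theta$ escapes to infinity in an arbitrary direction $\bm\delta$ — since the error $\varepsilon'$ need not have moments and the map $\bm\delta \mapsto$ (best direction $\u$) must be handled uniformly over the sphere; I expect to resolve it with a compactness/continuity argument on $\mathcal{S}^{d-1} \times \mathcal{S}^d$ combined with the strict positive-definiteness of $E(WW^\top)$, exactly the hypothesis the authors have built in.
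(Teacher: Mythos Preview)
Your proposal is correct and substantially more detailed than the paper's own proof, which reads in its entirety: ``Using similar techniques to Theorem~\ref{ThHRD}, the check of Q1, Q2 and Q4 for Theorems~\ref{th:SRD}--\ref{th:ZRD} follows the same fashion.'' Your treatments of Q1 and Q2 fill in exactly the details the paper omits, along the lines one would expect.

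For Q4 you take a slightly more elaborate route than the paper's template. In the proof of Theorem~\ref{ThHRD} the paper chooses the specific direction $\u_{\theta} = (\bm\theta_0 - \bm\theta)/\|\bm\theta_0 - \bm\theta\|$, which in your notation is simply $\u = -\bm\delta$. With this choice,
\[
(Y - W^\top\bm\theta)\,\u_\theta^\top W \;=\; \varepsilon'\,\u_\theta^\top W \;+\; (\u_\theta^\top W)^2\,\|\bm\theta_0 - \bm\theta\|,
\]
and the dominant term $(\u_\theta^\top W)^2\,\|\bm\theta_0 - \bm\theta\|$ is a perfect square times $t = \|\bm\theta_0 - \bm\theta\|$, so it diverges to $+\infty$ almost surely; hence its median does too, and $|\Med(\bar{P}_{\u_\theta,\theta})| \to \infty$. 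This sidesteps the joint compactness argument on $\mathcal{S}^{d-1}\times\mathcal{S}^d$ you were anticipating: no uniformity over $\bm\delta$ is needed, because the test direction moves with $\bm\theta$ and always produces a nonnegative quadratic in $W$. You still need, as you correctly note, that $\MAD(\tilde{P}_{\u})$ is finite and bounded away from zero on $\mathcal{S}^d$; the paper does not comment on this, and your semicontinuity-plus-compactness argument is the natural way to close that gap.
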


\textbf{\emph{Rayleigh regression depth}}. Similar to the projection regression depth, we define the sample Rayleigh regression depth as
\begin{eqnarray*}
  \PRD^r(\bm\theta, \bar{P}_n) = \frac{1}{1 + \tilde{O}^r(\theta, \bar{P}_n)}
\end{eqnarray*}
with
\begin{eqnarray*}
   \tilde{O}(\bm\theta, \bar{P}_n) = \sup_{\u\in \mathcal{S}^d} \frac{\left|\mu(\bar{P}_{\u, \theta, n})\right|}{\sigma(\tilde{P}_{\u, n})} = \sup_{\u\in \mathcal{S}^d} \frac{\left|\u^\top \0 - \mu(\bar{P}_{\u, \theta, n})\right|}{\sigma(\tilde{P}_{\u, n})}.
\end{eqnarray*}

Correspondingly, its population version is
\begin{eqnarray*}
  \PRD^r(\bm\theta, \bar{P}) = \frac{1}{1 + \tilde{O}^r(\bm\theta, \bar{P})}
\end{eqnarray*}
with
\begin{eqnarray*}
   \tilde{O}(\bm\theta, \bar{P}) = \sup_{\u\in \mathcal{S}^d} \frac{\left|\mu(\bar{P}_{\u, \theta})\right|}{\sigma(\tilde{P}_{\u})} = \sup_{\u \in\mathcal{S}^d} \frac{|E\left(\u^\top (Y - W^\top\theta)W\right)|}{\sqrt{\text{Var}(\u^\top YW)}}.
\end{eqnarray*}

\begin{theorem}
\label{th:RRD}
  Suppose $E(\varepsilon|X) = 0$ and $\bar{\Sigma} = E(WW^\top)$ is positive. Then Rayleigh regression depth satisfies all four properties of Definition \ref{RegDepth}.
\end{theorem}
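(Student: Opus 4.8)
The plan is to reduce $\PRD^r(\bm\theta,\bar{P})$ to an explicit formula and then read off Q1--Q4 one at a time, the point being that under $E(\varepsilon\mid X)=0$ the numerator of the outlyingness is an exact linear function of $\bm\theta$ while its denominator is free of $\bm\theta$ altogether. First I would note that, writing $Y=W^\top\bm\theta_0+\varepsilon$, one has $E(\varepsilon W)=E\bigl(W\,E(\varepsilon\mid X)\bigr)=\0$, so for every $\u\in\mathcal{S}^d$
\[
\mu(\bar{P}_{\u,\theta})=E\bigl(\u^\top(Y-W^\top\bm\theta)W\bigr)=\u^\top E(WW^\top)(\bm\theta_0-\bm\theta)=\u^\top\bar{\Sigma}\,\Delta,\qquad \Delta:=\bm\theta_0-\bm\theta,
\]
whereas $\sigma(\tilde{P}_{\u})^2=\mathrm{Var}(\u^\top YW)=\u^\top M\u$ with $M:=\mathrm{Cov}(YW)$ does not depend on $\bm\theta$. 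I would record here the (mild) non-degeneracy that $\sigma(\tilde{P}_{\u})>0$ for all $\u$, without which the depth is not well defined. Thus
\[
\tilde{O}^r(\bm\theta,\bar{P})=\sup_{\u\in\mathcal{S}^d}\frac{|\u^\top\bar{\Sigma}\,\Delta|}{\sqrt{\u^\top M\u}},\qquad
\PRD^r(\bm\theta,\bar{P})=\frac{1}{1+\tilde{O}^r(\bm\theta,\bar{P})}\in(0,1].
\]

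From this, Q2 and Q4 drop out. At $\bm\theta=\bm\theta_0$ we have $\Delta=\0$, hence $\tilde{O}^r=0$ and $\PRD^r(\bm\theta_0,\bar{P})=1$, the largest attainable value; and for $\bm\theta\neq\bm\theta_0$, positivity of $\bar{\Sigma}$ forces $\bar{\Sigma}\Delta\neq\0$, so $\tilde{O}^r(\bm\theta,\bar{P})>0$ and $\PRD^r(\bm\theta,\bar{P})<1$, giving Q2 (indeed with $\bm\theta_0$ the unique maximizer). For Q4, taking $\u=\bar{\Sigma}\Delta/\|\bar{\Sigma}\Delta\|$ in the supremum yields $\tilde{O}^r(\bm\theta,\bar{P})\ge\|\bar{\Sigma}\Delta\|/\sqrt{\lambda_{\max}(M)}\ge\lambda_{\min}(\bar{\Sigma})\,\|\Delta\|/\sqrt{\lambda_{\max}(M)}$, which tends to $\infty$ as $\|\bm\theta\|\to\infty$; hence $\PRD^r(\bm\theta,\bar{P})\to0$.

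For Q3 I would plug in $\bm\theta_\lambda:=\bm\theta_0+\lambda(\bm\theta-\bm\theta_0)$: then $\bm\theta_0-\bm\theta_\lambda=\lambda\Delta$, and since the denominator $\sqrt{\u^\top M\u}$ is untouched, $\tilde{O}^r(\bm\theta_\lambda,\bar{P})=\lambda\,\tilde{O}^r(\bm\theta,\bar{P})\le\tilde{O}^r(\bm\theta,\bar{P})$ for $\lambda\in[0,1]$, so $\PRD^r(\bm\theta_\lambda,\bar{P})\ge\PRD^r(\bm\theta,\bar{P})$. This is exactly the monotonicity step that could not be verified for $\HRD$: here it works precisely because the numerator is linear in $\bm\theta$ and the denominator is $\bm\theta$-free. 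For Q1, the scale part is immediate --- replacing $Y$ by $bY$ multiplies $\mu(\bar{P}_{\u,\theta})$ and $\sigma(\tilde{P}_{\u})$ by $|b|$, so the ratio and hence $\PRD^r$ are unchanged. For the affine part, the substitution $X\mapsto\Sigma X$ amounts to $W\mapsto AW$ with $A=\mathrm{diag}(1,\Sigma)$ and leaves the residual $Y-W^\top\bm\theta$ unchanged at the transformed parameter; since both $\mu(\bar{P}_{\u,\theta})$ and $\sigma(\tilde{P}_{\u})$ are homogeneous of degree one in $\u$, reindexing the supremum by $\vv=A^\top\u/\|A^\top\u\|$ --- which still ranges over the whole sphere --- shows $\tilde{O}^r$, and therefore $\PRD^r$, is invariant.

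The argument is largely routine; the two points I would take care over are (i) stating the non-degeneracy ($\sigma(\tilde{P}_{\u})>0$ for all $\u\in\mathcal{S}^d$) that makes $\tilde{O}^r$ well defined, and (ii) the change-of-variable bookkeeping in the affine part of Q1, i.e. verifying that the degree-one homogeneity in $\u$ genuinely cancels the effect of the (possibly non-orthogonal) reparametrization of the direction and that $\vv$ still sweeps all of $\mathcal{S}^d$. I do not anticipate a real obstacle.
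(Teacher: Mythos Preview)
Your proposal is correct and follows essentially the same route as the paper: reduce the numerator to $\u^\top\bar{\Sigma}(\bm\theta_0-\bm\theta)$ via $E(\varepsilon W)=\0$, observe the denominator is $\bm\theta$-free, and exploit the resulting homogeneity in $\Delta=\bm\theta_0-\bm\theta$ for Q3. The only cosmetic difference is that the paper pushes one step further and solves the Rayleigh quotient in closed form, obtaining $\tilde{O}^r(\bm\theta,\bar{P})=\|\tilde{\Sigma}^{-1/2}\bar{\Sigma}(\bm\theta_0-\bm\theta)\|$ with $\tilde{\Sigma}=\mathrm{Cov}(YW)$, from which Q2--Q4 are read off directly; your homogeneity argument bypasses this computation and is just as valid.
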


Theorem \ref{th:RRD} indicates that Rayleigh regression depth can serve as the usual depth function in the location setting to provide a fully center-outward ordering for the coefficient parameters.

\textbf{\emph{Zonoid regression depth}}. Similar to the discussions above, the sample zonoid regression depth can be given as
\begin{eqnarray*}
  \ZRD(\bm\theta, \bar{P}_n) = \ZD(\0, \bar{P}_{\theta, n}),
\end{eqnarray*}
and its population version is
\begin{eqnarray*}
  \ZRD(\bm\theta, \bar{P}) = \ZD(\0, \bar{P}_{\theta}).
\end{eqnarray*}

\begin{theorem}
\label{th:ZRD}
  Suppose $E(\varepsilon|X) = 0$ and $\bar{\Sigma} = E(WW^\top)$ is positive. Then zonoid regression depth satisfies Properties Q1, Q2, and Q3.
\end{theorem}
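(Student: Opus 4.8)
The plan is to read everything off the identity $\ZRD(\bm\theta,\bar P)=\ZD(\0,\bar P_{\theta})$, where $\bar P_{\theta}$ is the law of $V_{\theta}:=(Y-W^\top\bm\theta)W$, together with two elementary structural facts: (a) $\bm\theta\mapsto V_{\theta}$ is affine, and in fact $V_{\bm\theta_0+\lambda(\bm\theta-\bm\theta_0)}=(1-\lambda)V_{\theta_0}+\lambda V_{\theta}$ pointwise; (b) $E(\varepsilon\mid X)=0$ gives $E(V_{\theta_0}\mid W)=W\,E(\varepsilon\mid W)=\0$, so in particular the barycenter of $\bar P_{\theta_0}$ is $\0$. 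I will also use two standard facts about the location zonoid depth: $\ZD(\x,P)\ge\alpha$ iff there is a measurable weight $g$ with $g(\cdot)\in[0,1/\alpha]$, $\int g\,dP=1$ and $\int g(v)\,v\,dP(v)=\x$; and $\ZD$ attains its maximal value $1$ exactly at the barycenter of $P$ (via $g\equiv1$, $\alpha=1$).

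Q1 is pure bookkeeping: the two data transformations in the statement leave every residual $r_i(\cdot)$ unchanged and act on the weight vectors by a fixed nonsingular linear map (of the form $W_i\mapsto AW_i$, respectively $V_{\theta}\mapsto bV_{\theta}$ with $b\neq0$), so $\bar P_{\theta}$ is pushed forward by a nonsingular linear map that fixes $\0$; the affine invariance of the location zonoid depth (property P1) then yields Q1. Q2 follows from (b): the barycenter of $\bar P_{\theta_0}$ is $\0$, hence $\ZRD(\bm\theta_0,\bar P)=\ZD(\0,\bar P_{\theta_0})=1\ge\ZD(\0,\bar P_{\theta})=\ZRD(\bm\theta,\bar P)$ for every $\bm\theta$, so $\bm\theta_0$ is a maximizer; positivity of $\bar\Sigma=E(WW^\top)$ forces $E(V_{\theta})=-\bar\Sigma(\bm\theta-\bm\theta_0)\neq\0$ for $\bm\theta\neq\bm\theta_0$, making it the unique one.

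The substance is Q3: fixing $\bm\theta$, writing $\bm\delta:=\bm\theta-\bm\theta_0$ and $\bm\theta_\lambda:=\bm\theta_0+\lambda\bm\delta$, I must show $\ZD(\0,\bar P_{\theta_\lambda})\ge\ZD(\0,\bar P_{\theta})=:\alpha$. Pick $g$ with $g\in[0,1/\alpha]$, $E[g(V_{\theta})]=1$, $E[g(V_{\theta})V_{\theta}]=\0$ (passing to $\alpha'\uparrow\alpha$ if the supremum is not attained). The key move is to project this weight onto $\sigma(W)$: set $g_0(W):=E[g(V_{\theta})\mid W]$, so $g_0\in[0,1/\alpha]$, $E[g_0(W)]=1$, and $E[g_0(W)V_{\theta_0}]=E[g_0(W)\,W\,E(\varepsilon\mid W)]=\0$ by (b). Then define $g_\lambda:=(1-\lambda)g_0(W)+\lambda g(V_{\theta})$, viewed as a function of $V_{\theta_\lambda}$ — legitimate because $V_{\theta_\lambda}=r_{\theta_\lambda}W$ determines $(\varepsilon,W)$ wherever $V_{\theta_\lambda}\neq\0$ (first coordinate gives $r_{\theta_\lambda}$, whence $X$, $W$ and $\varepsilon=r_{\theta_\lambda}+\lambda W^\top\bm\delta$), and on $\{V_{\theta_\lambda}=\0\}$ one assigns $g_\lambda$ the value $E[(1-\lambda)g_0(W)+\lambda g(V_{\theta})\mid V_{\theta_\lambda}=\0]$, which alters neither $E[g_\lambda(V_{\theta_\lambda})]$ nor $E[g_\lambda(V_{\theta_\lambda})V_{\theta_\lambda}]$ (the latter since $V_{\theta_\lambda}=\0$ on that set). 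Clearly $g_\lambda\in[0,1/\alpha]$ and $E[g_\lambda(V_{\theta_\lambda})]=1$.

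What remains is $E[g_\lambda(V_{\theta_\lambda})V_{\theta_\lambda}]=\0$. Expanding with $V_{\theta_\lambda}=(1-\lambda)V_{\theta_0}+\lambda V_{\theta}$ gives two diagonal terms, $(1-\lambda)^2E[g_0(W)V_{\theta_0}]=\0$ and $\lambda^2E[g(V_{\theta})V_{\theta}]=\0$, plus two cross terms of weight $\lambda(1-\lambda)$, namely $E[g_0(W)V_{\theta}]$ and $E[g(V_{\theta})V_{\theta_0}]$. Using $V_{\theta_0}=V_{\theta}+(W^\top\bm\delta)W$: the first equals $E[g_0(W)V_{\theta_0}]-E[g_0(W)(W^\top\bm\delta)W]=-E[g(V_{\theta})(W^\top\bm\delta)W]$, where the last step is the tower property since $(W^\top\bm\delta)W$ is $\sigma(W)$-measurable; the second equals $E[g(V_{\theta})V_{\theta}]+E[g(V_{\theta})(W^\top\bm\delta)W]=E[g(V_{\theta})(W^\top\bm\delta)W]$. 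The cross terms therefore cancel, so $E[g_\lambda(V_{\theta_\lambda})V_{\theta_\lambda}]=\0$, giving $\ZRD(\bm\theta_\lambda,\bar P)=\ZD(\0,\bar P_{\theta_\lambda})\ge\alpha=\ZRD(\bm\theta,\bar P)$, which is Q3. The step I expect to be the real obstacle is exactly this cross-term cancellation — finding that the $\sigma(W)$-projection $g_0$ of the optimal weight is the right companion — while the measurable recovery of $(\varepsilon,W)$ from $V_{\theta_\lambda}$ and the handling of $\{V_{\theta_\lambda}=\0\}$ (relevant only when $\varepsilon$ has atoms) are merely technical; note the hypothesis $E(\varepsilon\mid X)=0$ is used precisely twice, in (b) and in the tower-property identity above, and that (unlike Q4, which genuinely can fail for heavy-tailed errors) no symmetry of $\varepsilon$ is needed.
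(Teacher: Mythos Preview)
Your argument is correct, and in fact goes considerably further than the paper's own treatment. The paper's proof of Theorems~\ref{th:SRD}--\ref{th:ZRD} is a single paragraph that dispatches Q1, Q2 (and Q4, where claimed) by saying they follow ``the same fashion'' as Theorem~\ref{ThHRD}, and then explicitly verifies Q3 \emph{only} for the Rayleigh regression depth (Theorem~\ref{th:RRD}), where an explicit quadratic formula $\tilde O(\bm\theta,\bar P)=\|\tilde\Sigma^{-1/2}\bar\Sigma(\bm\theta_0-\bm\theta)\|$ makes monotonicity immediate. No argument is offered for Q3 in the zonoid case. So for Q1 and Q2 your reasoning matches the paper's intended route (affine invariance of location zonoid depth under a nonsingular linear pushforward fixing $\0$; maximality at the mean via $g\equiv1$, with uniqueness forced by $E(V_\theta)=-\bar\Sigma(\bm\theta-\bm\theta_0)\neq\0$), while for Q3 you are supplying what the paper omits.

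Your Q3 argument is a genuine proof, not a mirror of the Rayleigh computation: the construction $g_\lambda=(1-\lambda)E[g(V_\theta)\mid W]+\lambda g(V_\theta)$, the tower-property cancellation of the cross terms via $E[g_0(W)(W^\top\bm\delta)W]=E[g(V_\theta)(W^\top\bm\delta)W]$, and the measurable recovery of $(\varepsilon,W)$ from $V_{\theta_\lambda}$ on $\{r_{\theta_\lambda}\neq0\}$ (using that the first coordinate of $V_{\theta_\lambda}$ is $r_{\theta_\lambda}$) are all sound. The cleanest way to phrase your measurability fix is simply to replace $g_\lambda$ by $E[g_\lambda\mid V_{\theta_\lambda}]$, which is automatically $\sigma(V_{\theta_\lambda})$-measurable, still lies in $[0,1/\alpha]$, and preserves both $E[g_\lambda]$ and $E[g_\lambda V_{\theta_\lambda}]$ by the tower property; this subsumes your separate handling of $\{V_{\theta_\lambda}=\0\}$. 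Two small cosmetic points: in your Q1 sketch the scale transformation $Y\mapsto bY$ multiplies the residual by $b$ rather than leaving it unchanged (you implicitly correct this by writing $V_\theta\mapsto bV_\theta$), and the paper's Q1 as stated allows $b=0$, for which the identity fails for any regression depth---this is a defect in the paper's Definition~\ref{RegDepth}, not in your proof.
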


In fact, when proving Q2 in Theorem \ref{th:ZRD}, it is easy to check that the maximizer of the sample zonoid regression depth $\ZRD(\bm\theta, \bar{P}_n)$ is the conventional least squares estimator $\hat{\bm\theta} = (\sum_{i=1}^n W_i W_i^\top)^{-1} \sum_{i=1}^n W_i Y_i$ with $\ZRD(\hat{\bm\theta}, \bar{P}_n) = 1$.

\vskip 0.1 in
\section{Illustrations}\label{Illustrations}
\paragraph{}
\vskip 0.1 in

To gain more insight into various regression depth notions mentioned above, we provide some illustrations in this section. The data set is generated from the following linear model:
\begin{eqnarray*}
  Y = 0.5 + 0.5 X + \varepsilon,
\end{eqnarray*}
where $X \sim N(0, 1)$ and $\varepsilon \sim N(0, 0.2)$. The sample size $n$ is 300. Its scatter plot is given in Figure~\ref{fig:scatter}.

\begin{figure}[H]
\centering
	\includegraphics[angle=0,width=3.5in]{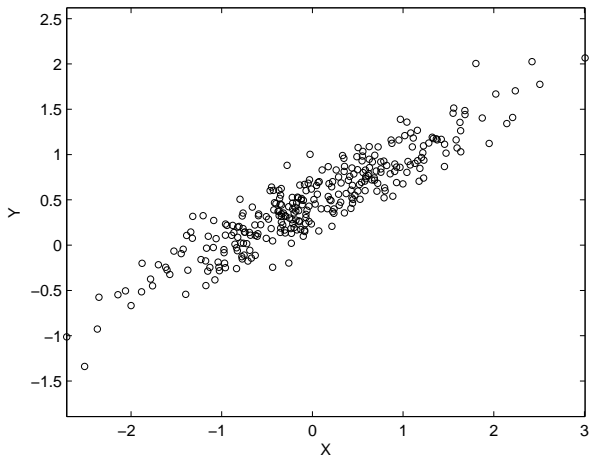}
\caption{Shown is the scatter plot of the data set.}
\label{fig:scatter}
\end{figure}

For each notion of the aforementioned regression depths, we plot two figures, namely, a 3-dimensional depth plot and its corresponding contours over $\bm\theta = (\beta_0, \beta_1)^\top \in [0, 1] \times [0, 1]$; see Figures~\ref{fig:HRD} - \ref{fig:ZRD}. Seven contours, as well a maximizer of each regression depth function (see the big point in the center), are reported. The figures shows that, except the sample Rayleigh regression depth, all reported depth contours of these four depth functions are not convex, but still roughly nested, and their shape is data-dependent. Generally speaking, among these four depths, the contours of projection regression depth appears to be smoother than those of the halfspace and simplicial regression depth, but rougher than those of the zonoid regression depth. Furthermore, the roughly nested construction of contours indicates that the value of all regression depths tends to become smaller when $\bm\theta$ is moving from the center to outside, although it may be not strictly decreasing for, e.g., halfspace regression depth. One exception is the Rayleigh regression depth. Its sample contours are some nested ellipses, which coincides with the Theorem \ref{th:RRD}.

\begin{figure}[H]
\centering
	\subfigure[Depth plot]{
	\includegraphics[angle=0,width=2.9in]{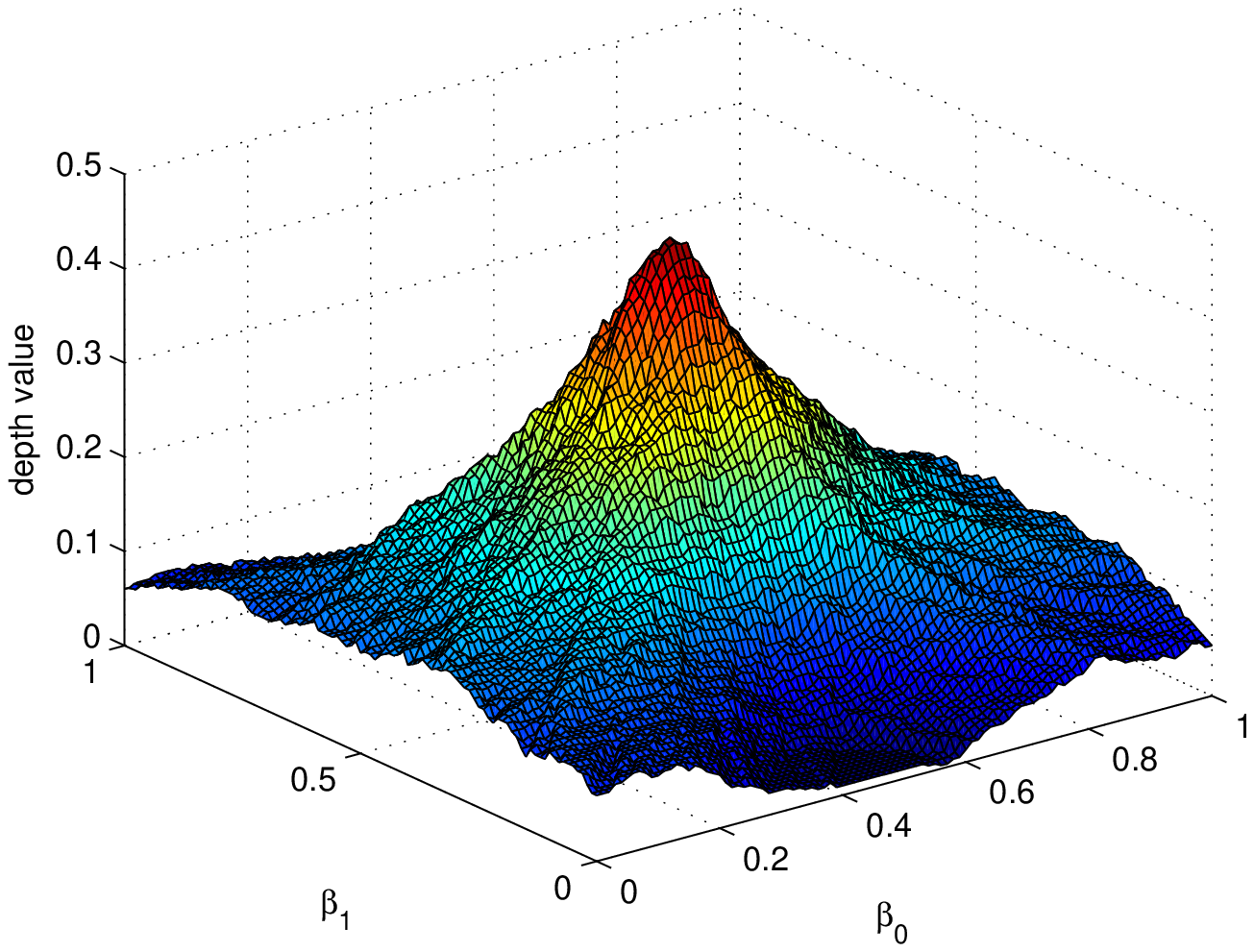}
	}\quad
	\subfigure[Contours]{
	\includegraphics[angle=0,width=2.9in]{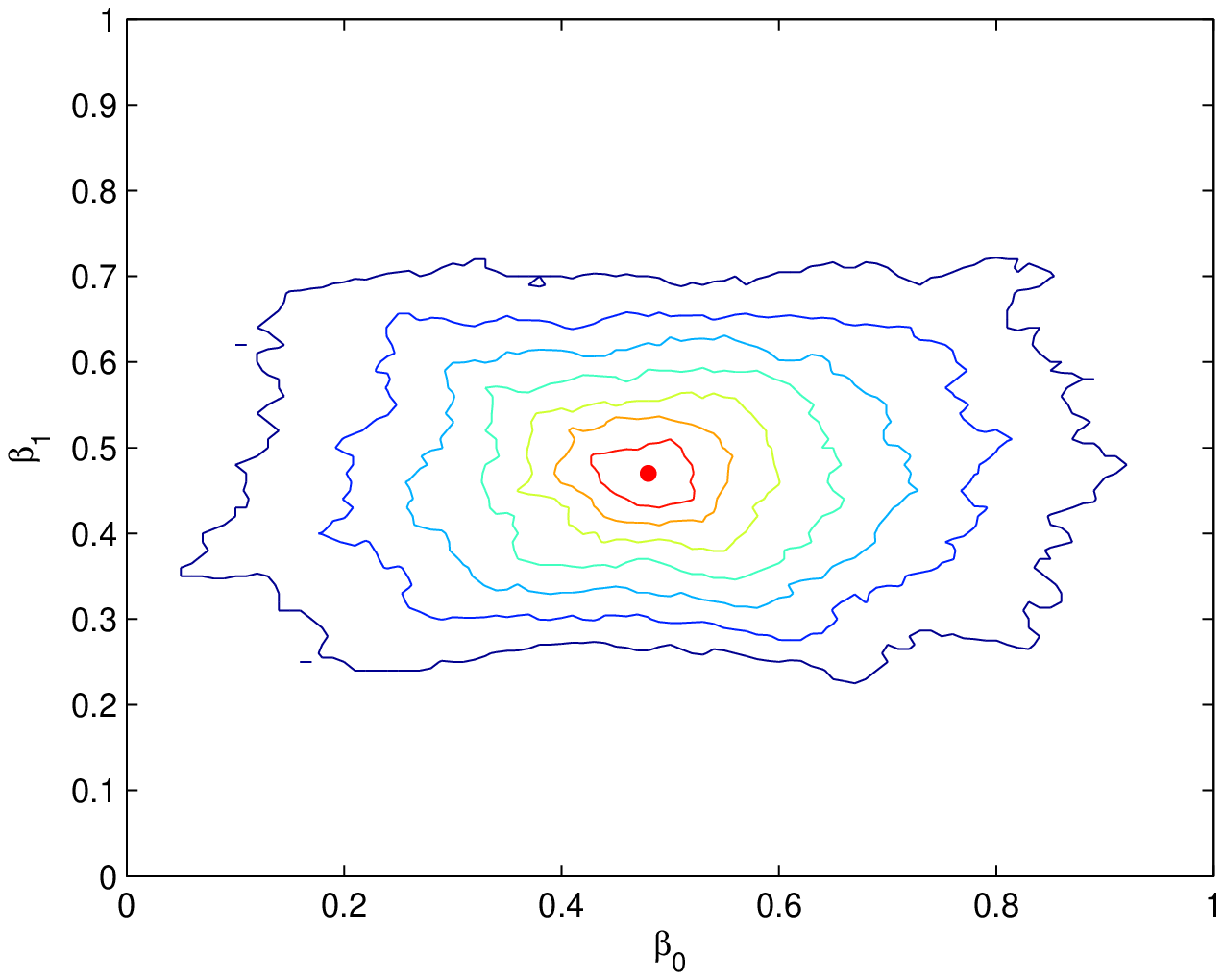}
	}
\caption{Shown are illustrations for the sample halfspace regression depth function. The depth values of sever contours are $\alpha = 0.1500, 0.1956, 0.2411, 0.2867, 0.3323, 0.3779, 0.4234$ from the periphery inwards.}
\label{fig:HRD}
\end{figure}

\begin{figure}[H]
\centering
	\subfigure[Depth plot]{
	\includegraphics[angle=0,width=2.9in]{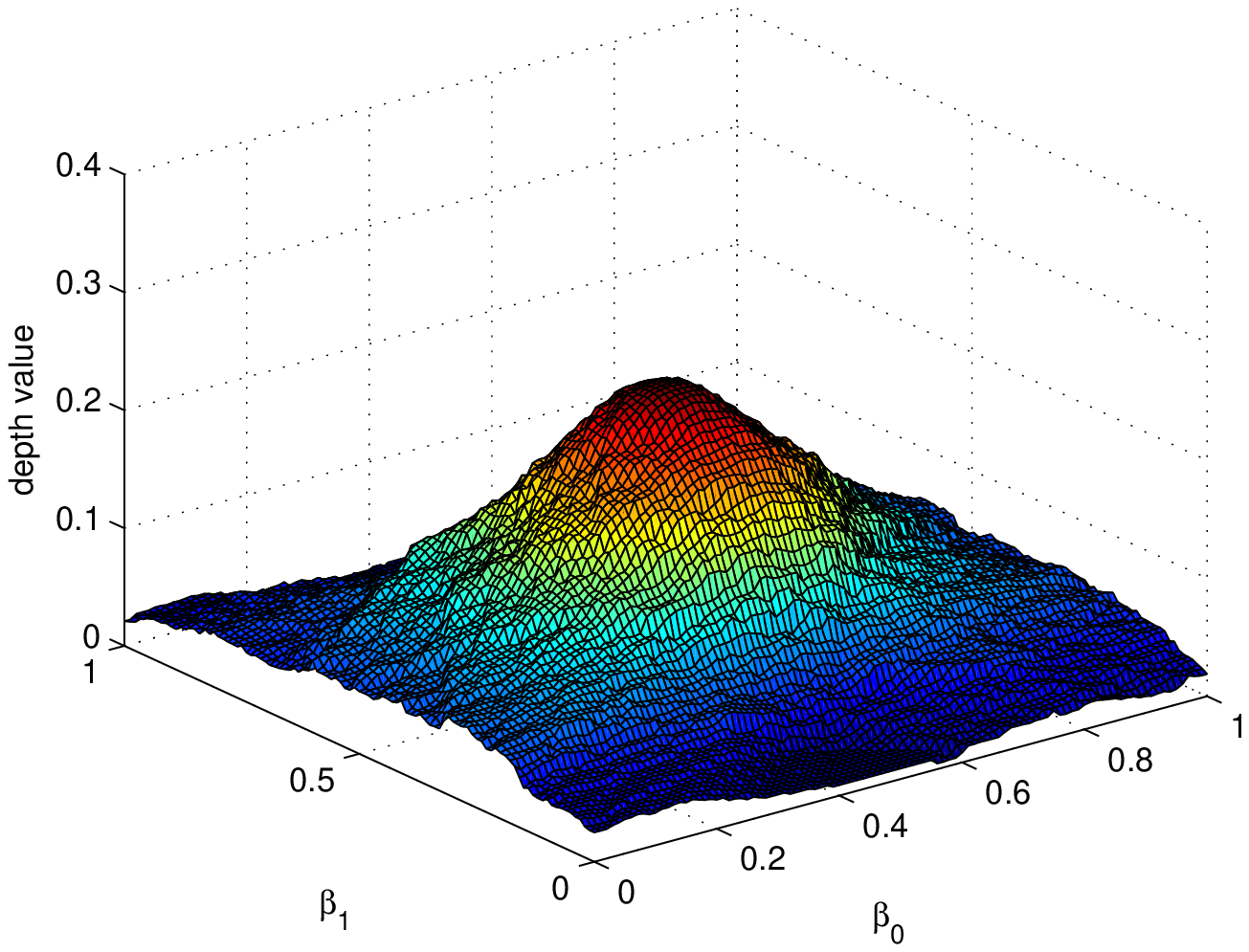}
	}\quad
	\subfigure[Contours]{
	\includegraphics[angle=0,width=2.9in]{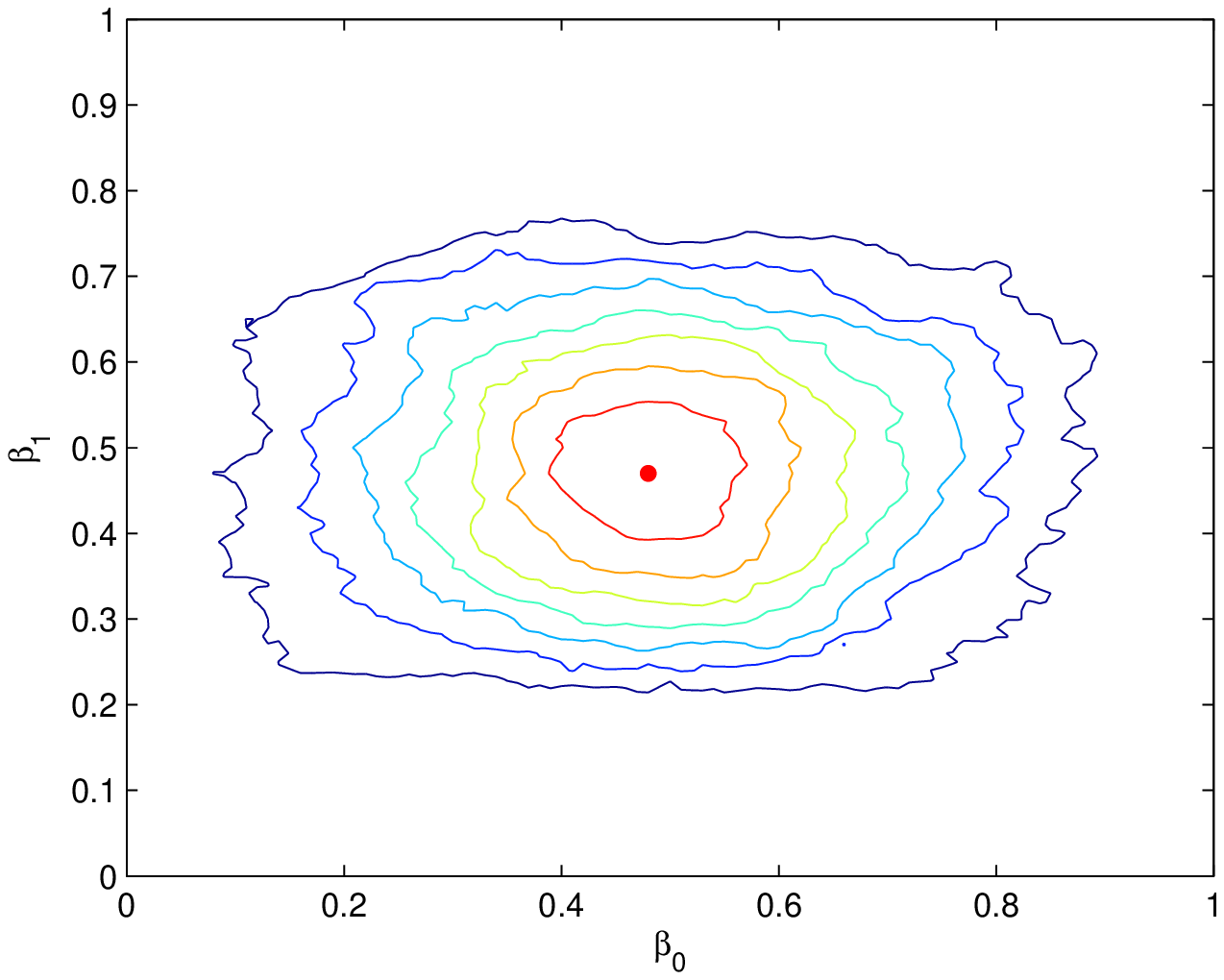}
	}
\caption{Shown are illustrations for the sample simplicial regression depth function. The depth values of sever contours are $\alpha = 0.0800, 0.1046, 0.1292, 0.1538, 0.1784, 0.2030, 0.2276$ from the periphery inwards.}
\label{fig:SRD}
\end{figure}

\begin{figure}[H]
\centering
	\subfigure[Depth plot]{
	\includegraphics[angle=0,width=2.9in]{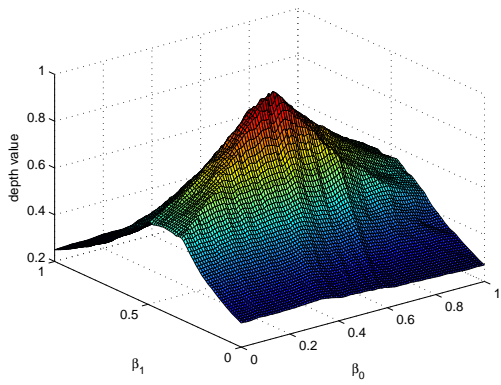}
	}\quad
	\subfigure[Contours]{
	\includegraphics[angle=0,width=2.9in]{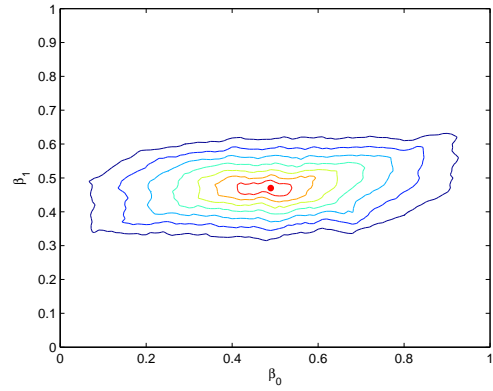}
	}
\caption{Shown are illustrations for the sample projection regression depth function. The depth values of sever contours are $\alpha = 0.6000, 0.6546, 0.7092, 0.7637, 0.8183, 0.8729, 0.9275$ from the periphery inwards.}
\label{fig:PRD}
\end{figure}

\begin{figure}[H]
\centering
	\subfigure[Depth plot]{
	\includegraphics[angle=0,width=2.9in]{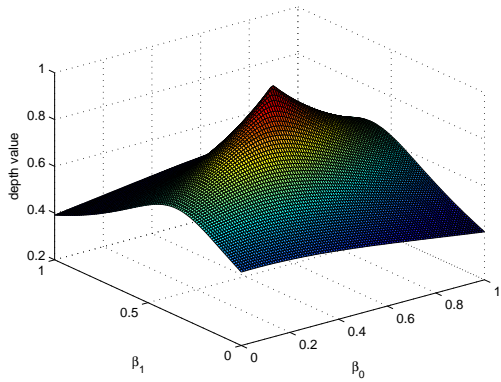}
	}\quad
	\subfigure[Contours]{
	\includegraphics[angle=0,width=2.9in]{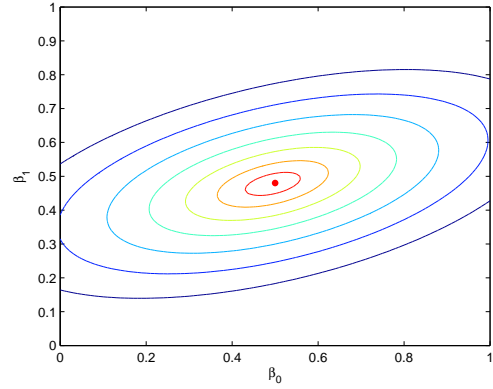}
	}
\caption{Shown are illustrations for the sample Rayleigh regression depth function. The depth values of sever contours are $\alpha = 0.6000, 0.6561, 0.7123, 0.7684, 0.8245, 0.8807, 0.9368$ from the periphery inwards.}
\label{fig:RRD}
\end{figure}

\begin{figure}[H]
\centering
	\subfigure[Depth plot]{
	\includegraphics[angle=0,width=2.9in]{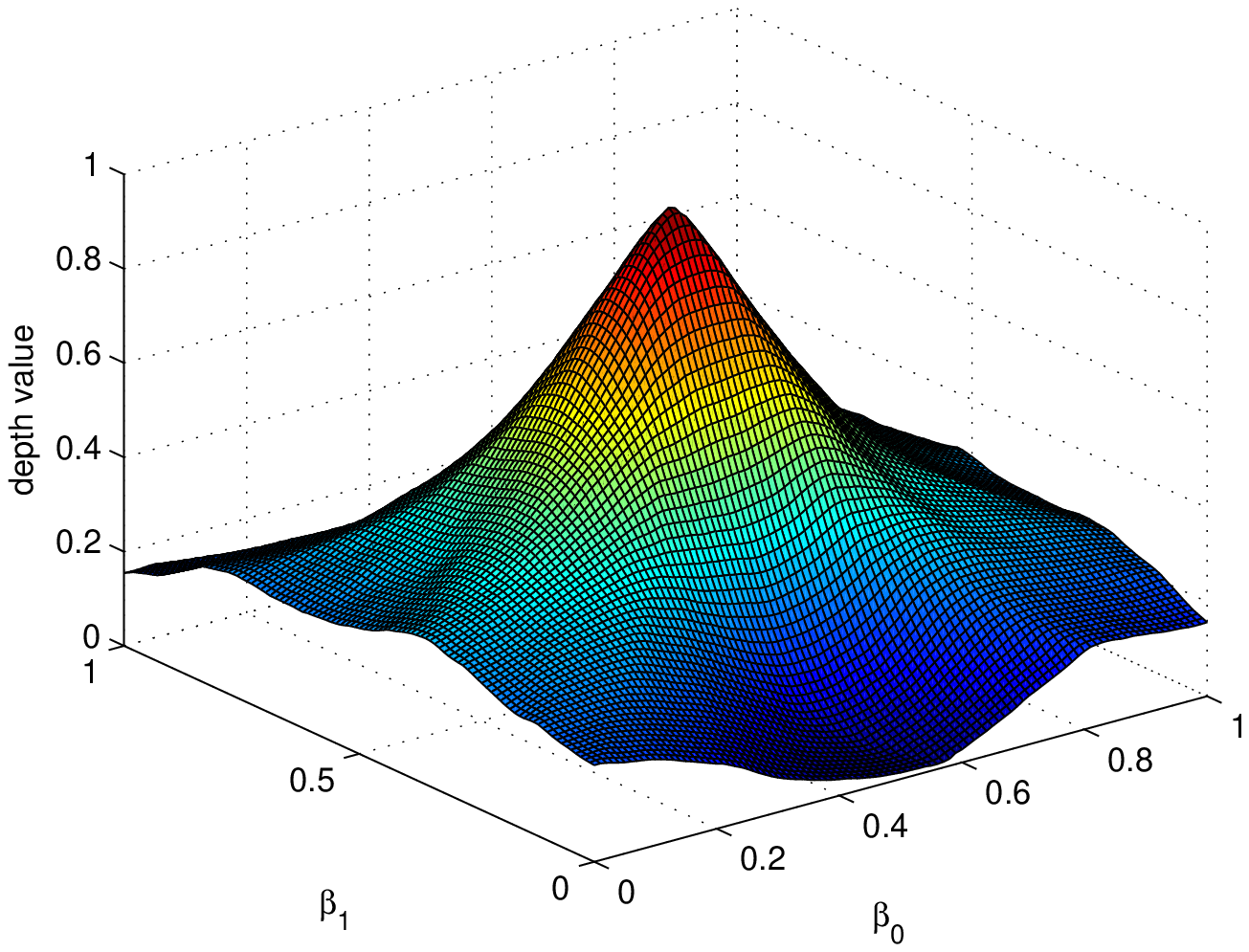}
	}\quad
	\subfigure[Contours]{
	\includegraphics[angle=0,width=2.9in]{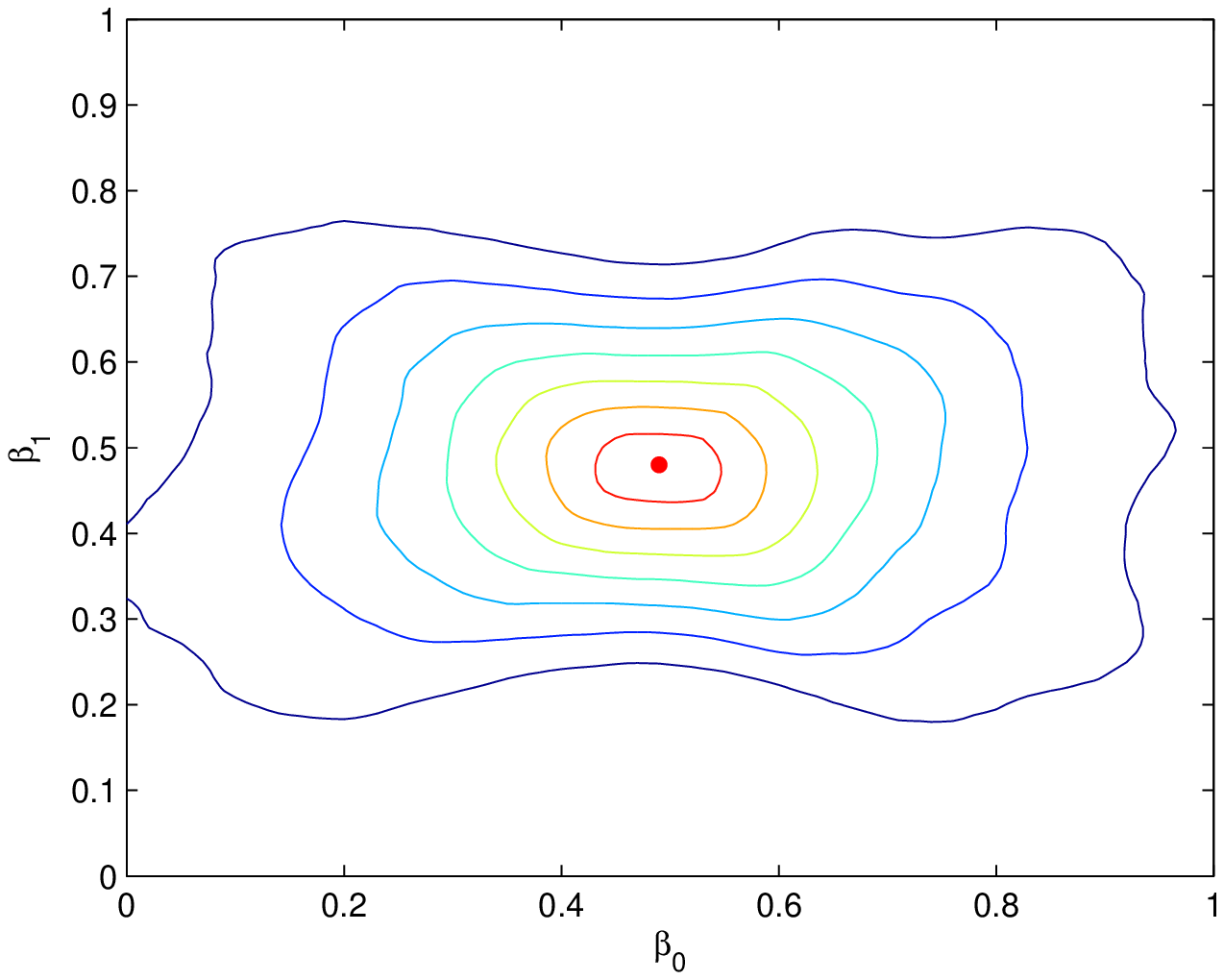}
	}
\caption{Shown are illustrations for the sample zonoid regression depth function. The depth values of sever contours are $\alpha = 0.3000, 0.3992, 0.4985, 0.5977, 0.6970, 0.7962, 0.8954$ from the periphery inwards.}
\label{fig:ZRD}
\end{figure}

Furthermore, all maximizers of these five regression depth functions are very closed to the true coefficient parameter $\bm\theta_0 = (0.5, 0.5)^\top$ in this example. It indicates that, relying on any of them, we are able to develop a proper median-like estimator for the unknown regression efficient.

\vskip 0.1 in
\section{Concluding remarks}\label{Remark}
\paragraph{}
\vskip 0.1 in

In this paper, we investigated the possibility of generalizing some usual depths into the regression setting. A general concept of regression depth was proposed. Several new regression depth notions, such as projection regression depth, were suggested relying on a further look at the relationship between the halfspace regression depth of \cite{RouHub1999} and Tukey's location depth.

Unfortunately, it turns out that the halfspace, simplicial, projection, zonoid regression do not satisfy Q3 of \ref{RegDepth}. Hence, using these four depth notions, it is probably impossible to induce a strictly decreasing ordering when the parameter $\bm\theta$ is moving along any line stemming from the center $\bm\theta_0$ to outside.

\emph{However}, the results of Theorems \ref{ThHRD}-\ref{th:ZRD}, as well as the illustrations given in Section \ref{Illustrations}, indicate that it is still possible to induce some median-like estimators relying on these depth nations. In the literature, it is well known that the breakdown point robustness of the projection median is much higher than that of Tukey's halfspace median \citep{Zuo2003}. Hence, we anticipate that this is also the case for projection regression depth, and will pursuit it in the future. On the other hand, the Rayleigh regression depth satisfies Property Q3, confirmed by Figure~\ref{fig:RRD}. In this sense, if the main purpose is to provide a full ordering like in the location setting, one may use the Rayleigh regression depth.

Observe that the computational issue is usually concerned by practitioners related to statistical depth functions. Fortunately, the depth value of all regression depth notions mentioned in this paper is computable, because their definitions are based on their counterparts in the location setting, thanking to the latest developments of the computation of halfspace depth \citep{LZ2014a, DM2014}, projection depth \citep{LZ2014b}, and zonoid depth \citep{MLB2009}, and so on. We refer reads to \cite{PMD2016} for details of a \textbf{R} package \emph{ddalpha}, which includes implementations of most depth notions in the location setting.

\vskip 0.1 in
\section*{Acknowledgements}
\paragraph{}
\vskip 0.1 in \label{Conclusion}

The research is supported by NNSF of China (Grant No.11601197, 11461029), China Postdoctoral Science Foundation funded project (2016M600511, 2017T100475),  NSF of Jiangxi Province (No.20171ACB21030, 20161BAB201024), and the Key Science Fund Project of Jiangxi provincial education department (No.GJJ150439).
\medskip

\vskip 0.1 in
\section*{Appendix: Detailed proofs of the main results}\label{Appendix}
\paragraph{}
\vskip 0.1 in

In this appendix, we provided the detailed proofs of the main proposition and theorems.

\begin{proof}[Proof of Proposition \ref{equivalenceHD}]
For the halfspace regression depth given above, a simple derivation leads to
\begin{eqnarray*}
  && \HRD(\bm\theta, \bar{P}_n)\\
   && \quad=\inf_{v_0\in \mathbb{R}^1, \vv_1 \in \mathcal{S}^{d-1}} \frac{1}{n} \min\left\{\sum_{i=1}^n I\left(r_i(\bm\theta) (\vv_1^\top X_i - v_0) \ge 0\right), ~ \sum_{i=1}^n I\left(r_i(\bm\theta) (\vv_1^\top X_i - v_0) \le 0\right)\right\}\\
   && \quad=\inf_{v_0\in \mathbb{R}^1, \vv_1 \in \mathcal{S}^{d-1}} \frac{1}{n} \left\{\sum_{i=1}^n I\left(r_i(\bm\theta) (\vv_1^\top X_i - v_0) \le 0\right)\right\}\\
   && \quad=\inf_{v_0\in \mathbb{R}^1, \vv_1 \in \mathcal{S}^{d-1}} \frac{1}{n} \left\{\sum_{i=1}^n I\left(r_i(\bm\theta) {-v_0 \choose \vv_1}^\top {1 \choose X_i} \le 0\right)\right\}\\
   && \quad=\inf_{v_0\in \mathbb{R}^1, \vv_1 \in \mathcal{S}^{d-1}} \frac{1}{n} \left\{\sum_{i=1}^n I\left(r_i(\bm\theta) {-v_0 \choose \vv_1}^\top {1 \choose X_i} \Big/ \left\|{-v_0 \choose \vv_1}\right\| \le 0\right)\right\}\\
   && \quad=\inf_{\u \in \mathcal{S}^{d}_*} \frac{1}{n} \left\{\sum_{i=1}^n I\left(r_i(\bm\theta) (\u^\top W_i) \le 0\right)\right\}.
\end{eqnarray*}
Here $\mathcal{S}^{d}_* = \mathcal{S}^{d} \setminus \{(1, 0, 0, \cdots, 0)^\top, (-1, 0, 0, \cdots, 0)^\top\}$ because we have $\vv_1 \neq \0$ for any $\vv_1 \in \mathcal{S}^{d-1}$.

Next, for any given $n$ and $\bm\theta$, by \cite{LZ2014a}, we claim that the minimum value of $G(\u) = \frac{1}{n} \left\{\sum_{i=1}^n I\left(r_i(\bm\theta) (\u^\top W_i) \le 0\right)\right\}$ for any $\u \in \mathcal{S}^{d}$ occurs at some direction vectors contained in the inner points of an open set on $\mathcal{S}^{d}$. Hence,
\begin{eqnarray*}
  \inf_{\u \in \mathcal{S}^{d}_*} G(\u) \leq g(\u^*)
\end{eqnarray*}
for $\u^* \in \{(1, 0, 0, \cdots, 0)^\top, (-1, 0, 0, \cdots, 0)^\top\}$. That is,
\begin{eqnarray*}
  \HRD(\bm\theta, \bar{P}_n) &=& \inf_{\u \in \mathcal{S}^{d}} \frac{1}{n} \left\{\sum_{i=1}^n I\left(r_i(\bm\theta) (\u^\top W_i) \le 0\right)\right\}\\
  &=& \inf_{\u \in \mathcal{S}^{d}} \frac{1}{n} \left\{\sum_{i=1}^n I\left(r_i(\bm\theta) (\u^\top W_i) \le \u^\top \0\right)\right\}\\
  &=& \HD(\0, \bar{P}_{\theta, n}).
\end{eqnarray*}
This completes the proof.
\end{proof}

\begin{proof}[Proof of Theorem \ref{ThHRD}]
  The proof of Properties Q1-Q2 is trivial. We only check Q4. For any $\bm\theta \neq \bm\theta_0$, let $\u_{\theta} = (\bm\theta_0 - \bm\theta) / \|\bm\theta_0 - \bm\theta\|$. Observe that
  \begin{eqnarray*}
    \u_{\theta}^\top (Y - W^\top \bm\theta) W &=& \u_{\theta}^\top \varepsilon W + \u_{\theta}^\top WW^\top (\bm\theta_0 - \bm\theta) \\
    &=& \u_{\theta}^\top \varepsilon W + (\u_{\theta}^\top W)^2 \times \|\bm\theta_0 - \bm\theta\| \rightarrow +\infty
  \end{eqnarray*}
  in probability 1 as $\|\bm\theta\| \rightarrow \infty$. Hence,
  \begin{eqnarray*}
    \HRD(\bm\theta, \bar{P}) \le P\left(\u_{\theta}^\top(Y - W^\top \bm\theta)W \le 0\right) \rightarrow 0,
  \end{eqnarray*}
  as $\|\bm\theta\| \rightarrow \infty$. This completes the proof.
\end{proof}

\begin{proof}[Proof for Remark \ref{remark2}]
For the true parameter $\bm\theta_0$, since $\varepsilon = Y - W^\top \bm\theta_0$, we have
\begin{eqnarray*}
  P(\u^\top (Y - W^\top \bm\theta_0) W \le 0) &=& E_X(P(\u^\top (Y - W^\top \bm\theta_0) W \le 0|X))\\
  & = & E_X\left(\frac{1}{2} I(\u^\top W < 0) + \frac{1}{2} I(\u^\top W > 0) + I(\u^\top W = 0)\right)\\
   &=& \frac{1}{2},
\end{eqnarray*}
for any $\u \in \mathcal{S}^d$, where $E_X$ denotes the expectation taken with respect to $X$. This proofs that $Z = (X^\top, Y)^\top$ is regression halfspace symmetrically distributed about $\bm\theta_0$.

Next, for any $\bm\theta \neq \bm\theta_0$, observe that
\begin{eqnarray*}
  &&P(\vv_{\theta}^\top (Y - W^\top \bm\theta) W \le 0)\\
  && = P(\varepsilon \vv_{\theta}^\top W \le \vv_{\theta}^\top WW^\top (\bm\theta - \bm\theta_0))\\
  && = E_X\left(P(\varepsilon \ge -\|\bm\theta - \bm\theta_0\| \cdot \vv_{\theta}^\top W|X) I(\vv_{\theta}^\top W < 0) + P(\varepsilon \le -\|\bm\theta - \bm\theta_0\| \cdot \vv_{\theta}^\top W|X) I(\vv_{\theta}^\top W > 0)\right)\\
  && < \frac{1}{2},
\end{eqnarray*}
where $\vv_{\theta} = (\bm\theta_0 - \bm\theta) / \|\bm\theta_0 - \bm\theta\|$, because $X$ is continuous. Hence, the regression halfspace symmetrical center $\bm\theta_0$ is unique.
\end{proof}

\begin{proof}[Proof of Theorems \ref{th:SRD}-\ref{th:ZRD}]
  Using similar techniques to Theorem \ref{ThHRD}, the check of Q1, Q2 and Q4 for Theorems \ref{th:SRD}-\ref{th:ZRD} follows the same fashion. We only check Q3 for Theorem \ref{th:RRD}.

  Observe that, for any $\bm\theta \neq \bm\theta_0$,
  \begin{eqnarray*}
    \tilde{O}(\bm\theta, \bar{P}) & = & \sup_{\u \in\mathcal{S}^d} \frac{E\left(\u^\top (Y - W^\top\bm\theta)W\right)}{\sqrt{\text{Var}(\u^\top YW)}}\\
    & = & \sup_{\u \in\mathcal{S}^d} \frac{\u^\top \bar{\Sigma} (\bm\theta_0 - \bm\theta)}{\sqrt{\u^\top \tilde{\Sigma} \u}}\\
    & = & \sup_{\vv \in\mathcal{S}^d} \vv^\top \tilde{\Sigma}^{-1/2} \bar{\Sigma} (\bm\theta_0 - \bm\theta)\\
    & = & \|\tilde{\Sigma}^{-1/2} \bar{\Sigma} (\bm\theta_0 - \bm\theta)\|,
  \end{eqnarray*}
  where $\tilde{\Sigma} = E((YW - E(YW))(YW - E(YW))^\top)$. Hence, $\tilde{O}(\bm\theta_0 + \lambda(\bm\theta_0 - \bm\theta), \bar{P}) = (1 - \lambda) \|\tilde{\Sigma}^{-1/2} \bar{\Sigma} (\bm\theta_0 - \bm\theta)\| \leq \tilde{O}(\bm\theta, \bar{P})$, which implies $\PRD^r(\bm\theta_0 + \lambda(\bm\theta_0 - \bm\theta), \bar{P}) \ge \PRD^r(\bm\theta, \bar{P})$ for any $\lambda \in [0, 1]$. This completes the proof.
\end{proof}

\bigskip

\end{document}